\DeclareMathOperator*{\argmin}{arg\,min}
\tikzstyle{startstop} = [rectangle, rounded corners, minimum width=3cm, minimum height=1cm,text centered, draw=black]
\tikzstyle{process} = [rectangle, rounded corners,minimum width=3cm, minimum height=.8cm,, text centered, draw=black]
\tikzstyle{decision} = [diamond, minimum width=2cm, minimum height=1cm, text centered, draw=black]
\tikzstyle{arrow} = [thick,->,>=stealth]
\tikzstyle{blank} = [node distance=1cm]
\tikzstyle{block} = [rectangle, draw, fill=blue!20, text centered, rounded corners, minimum height=1em]
\tikzstyle{line} = [draw, -latex']
\tikzstyle{cloud} = [draw, ellipse,fill=red!20, node distance=4cm,minimum height=1em]
\begin{document}
%
\title{Ensuring Privacy in Location-Based Services: A Model-based Approach}
%
%
%
%

\author{Alireza~Partovi, Wei~Zheng~\IEEEmembership{Student~Member,~IEEE,}
        Taeho~Jung,~\IEEEmembership{Member,~IEEE,}
        and~Hai~Lin,~\IEEEmembership{Senior~Members,~IEEE}
\IEEEcompsocitemizethanks{
\IEEEcompsocthanksitem A. Partovi, Wei~Zheng, and H. Lin are with the Department of Electrical Engineering, University of Notre Dame, Notre Dame,	IN, 46556 USA.\protect\\
E-mails: $\{$apartovi@nd.edu,wzheng1,hlin1$\}$@nd.edu.
\IEEEcompsocthanksitem T. Jung  is with the Department of Computer Science and Engineering, University of Notre Dame, Notre Dame, IN, 46556 USA. \protect\\
E-mail: tjung@nd.edu.
}
}

%
%

\markboth{Preprint. Under Review.}{}
%



\newtheorem{definition}{Definition}
\newtheorem{problem}{Problem}
\newtheorem{rem}{Remark}
\newtheorem{alg}{Algorithm}
\newtheorem{thm}{Theorem}
\newtheorem{cor}{Corollary}
\newtheorem{prop}{Proposition}
\newtheorem{lemma}{Lemma}
\newtheorem{example}{Example}

\IEEEtitleabstractindextext{%
\begin{abstract} 
    In recent years, the widespread of mobile devices equipped with GPS and communication chips has led to the growing use of location-based services (LBS) in which a user receives a service based on his current location. The disclosure of user's location, however, can raise serious concerns about user privacy in general, and location privacy in particular which led to the development of various location privacy-preserving mechanisms aiming to enhance the location privacy while using LBS applications. In this paper, we propose to model the user mobility pattern and utility of the LBS as a Markov decision process (MDP), and inspired by probabilistic current state opacity notation, we introduce a new location privacy metric, namely $\epsilon-$privacy, that quantifies the adversary belief over the user's current location.  We exploit this dynamic model to design a LPPM that while it ensures the utility of service is being fully utilized, independent of the adversary prior knowledge about the user, it can guarantee a user-specified privacy level can be achieved for an infinite time horizon. 
The overall privacy-preserving framework, including the construction of the user mobility model as a MDP, and design of the proposed LPPM, are demonstrated and validated with real-world experimental data.
\end{abstract}

\begin{IEEEkeywords}
location-based services, privacy, Markov decision process, inference attack, opacity.
\end{IEEEkeywords}}

\maketitle

\IEEEdisplaynontitleabstractindextext

%
\IEEEpeerreviewmaketitle

\IEEEraisesectionheading{\section{Introduction}\label{sec:introduction}}

%
%
%
%
\IEEEPARstart{A}{s}  a result of recent technological advances  in sensing and tracking, and the widespread of mobile devices with significant computational and communication capabilities, the location-based applications become increasingly popular.
Example of LBS applications in smartphones are  mobile navigation, ride-sharing,  location-aware social networks, and location-based contextual advertising,  and dining recommendation.

Even though LBS are providing great benefits to the users, the exposure of the user's location raises major personal privacy concerns. The LBS servers or third party systems receive and  store user location data could use them to infer user's precise location and track his point of interest, giving rise to a variety of malicious activities \cite{chatzikokolakis2017methods}. Examples of these threats are \textit{tracking threats}: an adversary identifies user's mobility pattern and predict his future locations \cite{jan2000using}, \textit{identification threats}: an adversary uses user's locations to infer his identity from an anatomized database \cite{de2013unique}, and \textit{profiling threats}:   an adversary uses user's location of interest for profiling him in variety of sensitive information such as political view and health condition \cite{ashbrook2003using}. 

 Large body research has studied these  privacy issues and various privacy protection methods have developed to allow users to utilize LBS while limiting the leakage of users' confidential information. These methods are known as  location privacy protection mechanisms (LPPMs) that can be roughly divided into two main classes: \textit{identity anonymization} techniques   and \textit{location perturbation} techniques \cite{chatzikokolakis2017methods}. 
Anonymization techniques protect user privacy by dissociating the user's real identity from  his location-based information. 
This usually is done by a third party anonymizer  that replaces the identity of users with  temporary identifiers, namely  pseudonyms \cite{beresford2004mix}.
However, it turns out that merely removing or replacing user identity does not provide a strong privacy since spatio-temporal characteristics of the data can still help an adversary to track and re-identify the anonymous users \cite{hoh2006enhancing}. Thus, in addition to concealing users' identity, it is also important to obfuscate the users' position.

Location obfuscation mechanisms protect users privacy by deliberately degrading  precision of  users' location information in a way that the service can still be carried out to some acceptable extent without revealing users' true location \cite{riaz2018location}. 
Generally, this
is achieved by \textit{spatial obfuscation} techniques such as adding some noise \cite{andres2012geo,elsalamouny2016differential},  using dummy locations  \cite{niu2013pseudo,lu2008pad} or by  \textit{spatial cloaking} that  essentially enlarges the user's queried region \cite{gruteser2003anonymous}.  

These methods offer location  privacy by  increasing the adversary uncertainty about the user's current position.  
 However, often a strong adversary has prior knowledge about the user's movement pattern and can strategically update her belief based on the user's service queries and eventually can reduce her uncertainty over the user true locations 
 \cite{shokri2011quantifying,chatzikokolakis2017methods}.  
 Therefore,  in addition to concealing and obfuscating the user's current position, the LPPM  should take into account the adversary inference capability for the current  and   future observations.
 Furthermore, the LPPMs proposed in these methods typically assume the adversary prior knowledge does not violate the user desired privacy level which may not be a valid assumption if the adversary has accurate background knowledge about the user's behavior. For instance, a database derived from social media applications shows that the user's check-in locations  combined with other publicly available information such as the popularity of these locations, can effectively be utilized by an adversary to build an strong background knowledge about the user mobility pattern \cite{ahuja2019utility}.

To address these concerns, in this paper, we aim to provide a model-based location privacy for a  user who makes continuous queries from the LBS server.  Without loss of generality, we assume that the adversary is the LBS server who has knowledge of the user mobility pattern but is not capable of observing the user's real-time true locations. We design a LPPM that offers all-time privacy protection without having access to the adversary prior knowledge, and furthermore, we show that even if the prior knowledge does not meet the user-specified privacy level, the LPPM can still deceive the adversary to eventually respect the user's privacy standard.
Motivated by the event-driven nature of the mobile user's mobility model in LBS,  we propose to use a model-based location privacy-preserving framework. In particular,  we construct a Markov decision process that represents the user mobility patterns and the LBS utility model. In order to characterize the user location privacy, we adapt the notion of probabilistic current state opacity (CSO) that has been studied for the MDPs  and discrete event systems  \cite{Wu2020}
and introduce a related new notion called $\epsilon-$\textit{privacy} that captures the adversary uncertainty on the user's current location. In this setup, the user locations are the MDP states, and therefore location privacy is protected if and only if the constructed
MDP meets the $\epsilon-$privacy criteria.

The rest of this paper is organized as follows. In the subsequent section, we discuss the related works. 
Section \ref{sec:system_model} describes the process of constructing a MDP representing the user mobility patterns and the LBS utility model. Section \ref{sec:privacy_model} introduces the adversary thread model, and $\epsilon-$privacy metric, and furthermore studies the limitation of other popular location privacy notations including \textit{entropy}, \textit{expected inference error}, and \textit{differential privacy} for the localization attacks. Section \ref{sec:LPP_problem} presents the location privacy-preserving problem based on the proposed $\epsilon-$privacy metric.
Sections  \ref{sec:all_time_LPPM} provides the LPPM design process and addresses its computational complexity. 
Section \ref{sec:case_study} demonstrates  applicability of the proposed LPPM on an experimental dataset. The paper is concluded in 
Section \ref{sec:conclusion}.    

\section{Related Works} \label{sec:related_work}
Location privacy has been an active field of research over the past decade and various LPPMs have been proposed to protect user privacy.
An early approach for preserving privacy is to replace the identity of a user with a pseudonym \cite{pfitzmann2001anonymity}.  Some other methods propose to frequently change the pseudonyms when the users are within areas called mix-zones \cite{beresford2004mix}.
However, these approaches may fail in the context of LBS, since an adversary can de-anonymized and re-identify the anonymous users by correlating his reported location's information and the user background information \cite{hoh2006enhancing}. 
As a consequence,  in addition to anonymizing users' identities, the users' locations also should be perturbed before being supplied to the LBS server.

A common technique to perturb users' location is to reduce the precision of the location information in the spatial and temporal domains. This can often be achieved by cloaking techniques that essentially reduce the granularity of the users'  information. 
The $k-$anonymity privacy protection model is introduced based on the  cloaking technique \cite{sweeney2002k,gruteser2003anonymous}. This method makes the users' identity indistinguishable within a group of $k-1$ other users that are in the same spatial cloaking area. 
The $k-$anonymity privacy condition, however, only helps on user  \textit{query anonymity}, that is a private property to protect the association between users and queries, while it does not prevent the disclosure of the link between the user and his spatio-temporal data, namely \textit{location privacy}. Additionally, the $k-$anonymity privacy notation was shown to be vulnerable to the presence of an adversary with certain prior knowledge about users visiting locations \cite{shokri2010unraveling}. 

Sometimes being indistinguishable from the other members of a group is not sufficient to guarantee users privacy. For instance, when the entire group has the same privacy concern for being in a sensitive location that may leak their confidential information
To cope with this problem, some papers propose to provide location privacy by making the user's sensitive location indistinguishable from other landmarks. \cite{gruteser2004protecting} introduces a $k-$area cloaking mechanism that ensures a user's sensitive location is concealed by a region that covers at least $k-1$ other sensitive areas. \cite{bamba2008supporting,xue2009location}
proposes principle of  location $l-$diversity. This mechanism provides location privacy by ensuring the user query can be linked to at least $l$ semantically different location objects, such that each of these has a probability $1\slash l$ to be the true one.  These spatial cloaking methods   simply obfuscates the user sensitive location into an uncertainty region, and therefore
 are bound to fail in the presence of a strong adversary with inference capability. Such an adversary with prior knowledge about the user's location can utilize the user's queries to identify the user's true location  \cite{chatzikokolakis2017methods}. 
In contrast, in this paper, we aim to provide location privacy independent of the adversary prior knowledge.

Differential privacy-based approaches protect user privacy independent of the adversary prior knowledge by adding controlled noise to the query outcome. 
\cite{dewri2012local} proposed to combine  differential privacy with $k-$anonymity.
Following this work,  \cite{andres2012geo,andres2013geo} 
generalized the differential privacy with arbitrary metric and  developed a Planar Laplace mechanism to achieve $\epsilon-$geo-indistinguishability. 
In these approaches, the LPPM ensures  location privacy by restricting the adversary knowledge gain about the  true location. 
However, in presence of continues queries, when a user releases the perturbed locations, he may not know how close the adversary’s estimate will get to his secret locations, despite the differential privacy ensures that the relative gain of knowledge for the adversary is bounded.

LPPMs based on distortion privacy  address this issue \cite{shokri2011quantifying}. These methods characterize  user privacy based on the error of inferring the user’s secret location from the reported location information. This, however, requires an assumption of knowing the adversary prior information and is not robust to the adversaries with arbitrary prior knowledge \cite{shokri2015privacy}.

Recently a model-based LPPM is introduced in \cite{WU201433}. This work presents location privacy in the context of a formal method, where the user mobility pattern is modeled by a deterministic discrete event system with the state representing the user's locations.
In this scheme, location privacy is characterized by current-state opacity notation, and opacity enforcement technique is used to guarantee the user location privacy. The proposed LPPM, however, relies  on an unrealistic security assumption that the  user  and the adversary are behaving deterministically \cite{mathew2012predicting}.

To address these issues, we propose to use a model-based LPPM that
characterizes the location privacy based on a  Bayesian adversary who has access to the user mobility model. Through this setup, the LPPM can model the adversary inference dynamic and track the adversary knowledge over the user secrete locations.
Furthermore, we drive the necessary and sufficient conditions over the adversary inference dynamic that guarantees the user-defined privacy level can be achieved.  By incorporating this information in the LPPM design, we develop a privacy-preserving mechanism that randomizes the obfuscated reported locations such that the adversary estimation of the user's secret locations never violates the user-defined privacy requirement.
\section{User Mobility Model} \label{sec:system_model}
We consider a  mobile user who is required to share his location to receive some information from the service providers. Due to the user privacy concerns, the true locations of the user are required to be kept private to the user, and noisy locations are released to the service provider, which can be visible to an adversary. The adversary, therefore, is assumed to have knowledge of all historically released locations from the users, and hence can associate each mobile user with a mathematical model representing his mobility patterns. Examples are \cite{wang2019next} and  \cite{montazeri2016achieving} that considered Markov chain, and \cite{xiao2017loclok} that uses a hidden Markov chain to represent the user mobility patterns.

In our LPPM framework, the users' location correlations and the quality of service are modeled by a MDP which is assumed to be public and hence accessible by adversaries. In the following, we will formally define MDP, and show how a user mobility pattern can be represented by a MDP. In Section \ref{sec:case_study}, we will further illustrate this procedure based on a real-world dataset.

\begin{definition}
 A Markov decision process (MDP) is a tuple  $M=(S,A,T,p_0,u)$ where $S=\{s_1,\dots,s_n\}$ is a  set of states, and $A=\{a_1,\dots,a_m\}$ is a  set of actions.  $T(s,a,s'):\text{Prob}\left[s' \mid s,a\right]$,  is the probability of transition from state $s$ to $s'$ with action $a \in A$.    The initial state distribution is
$p_0:S \to \left[0,1 \right]$. Given $s \in S$, and  $a\in A$, the utility function is $u(s,a) \in \mathbb{R}$. We denote the set of available actions at state $s$ by $A(s)$.
\end{definition}
We assume the MDP has a finite state space, finite action space, and bounded reward function.
Throughout the paper, we often use $T^a_{ss'}$ as shorthand of $T(s,a,s')$, $\sum_s$ as shorthand for $\sum_{s \in S}$,  $|S|$ as the cardinality  of set $S$, and $A - B$ as a set difference.  Let's denote probability distribution over set $S$ by $\Delta(S)=\{p \in \mathbb{R}^{n} \mid \sum_s p(s)=1, p(s) \ge 0, \forall s \in S\}$,  and we call  \textit{probability vector} $p \in \Delta(S)$ has a uniform distribution, if $p(s)= {1}\slash{|S|} $ for all $s \in S$.  

\subsection{User Point of Interests}
LBS often store users' mobility traces that contain temporal and spatial data of users' visiting places. 
These location traces can be used for statistical analyses to obtain users' typical mobility patterns and in particular to extract the users'  points of interest (POIs),
that the adversary can use to infer a variety of confidential information about the user \cite{ashbrook2003using}. 
 In this paper, we consider a Markov decision process  mobility model, where its states are the users' POIs. Let's denote a user POIs by a finite set  $S$, where each $s \in S$ represents a predetermined spatial data of POIs.

\begin{rem}
 To simplify the presentation, we define the state set $S$ only based on the user's spatial information, however, it would be easy to incorporate other information of user mobility patterns such as timestamps. For instance, we can include timestamps $t_k$, in the state set as $\hat{S}=\{ (s,t_k) \mid s\in S, k \in N \}$, where $N$ represents a finite-time horizon.
\end{rem}

\subsection{Location Release Mechanism} \label{subsec:Loc_Release_Mech}
Generally location obfuscation is achieved by {spatial obfuscation} techniques such as   using dummy locations  \cite{niu2013pseudo,lu2008pad}, {spatial cloaking} \cite{gruteser2003anonymous,chow2011spatial}, or by  adding noise  \cite{andres2012geo,elsalamouny2016differential}. In our privacy protection framework, the location obfuscation mechanism for each user's POI can be seen as an action of reporting the obfuscated position of that location. We therefore consider  the MDP action set $A$ as a set of user's obfuscated spatial information  that are reported to the LBS server and hence are observable to the adversary.

\begin{rem}
 One of our objective here is to demonstrate, even if each user's POIs individually is protected  by a location obfuscation mechanism, an adversary can still use the user mobility model to improve his estimation about user's true location and perform a localization attack.
\end{rem}

 \subsection{Utility of Service}
Inherently there is a utility loss associated with obfuscation mechanism. Here, we only require that the utility of service associated with the selected location obfuscation mechanism can be modeled as a real-valued and bounded MDP utility function $u$.  In particular, given a user's POI $s$, and the a location obfuscation mechanism $a$, $u(s,a)$ characterizes the service quality loss of reporting $a$ instead of the exact location of $s$. For the sake of completeness, through out the paper, we evaluate our privacy protection model using the cloaking-region method \cite{niu2013pseudo}. 

\subsection{User Mobility Model Illustration }\label{sec:example_illustration}
In this section, we aim to illustrate the proposed  MDP model based on a student mobility pattern on the campus of Notre Dame University. We consider a  student with a mobile device moving between the university libraries, and a LBS server that provides a service to the student.
 Figure \ref{fig:ND_MDP_example} shows  the Notre Dame university libraries map.
 We consider the libraries' locations as the student POIs and use the student movements to define the transition between the POIs. 
 Without loss of generality, we  assume the student uses the spatial cloaked region method \cite{gruteser2003anonymous} to obfuscate his true locations.
 Regions $A=\{a,b,c,d,e,f\}$ in the figure are the cloaked regions that are  precomputed by the anonymizer \cite{gruteser2003anonymous}. The student broadcasts the cloaked region associated with his source location to the LBS server and transits to the next location. 
 
Figure \ref{fig:ND_MDP_example} shows the constructed MDP model for the user mobility pattern. The states are the libraries' locations and the transition label shows the cloaking region information. The MDP transition function represents the probability of selecting the available cloaking regions at each POI. For instance, if the student at state $s$ uniformly selects a cloaking region $a \in A(s)$,  and moves to the state $s'$,  we define $T(s,a,s')= 1 \slash |A(s)|$. 
We assume the user starts from Hesburgh  library, labeled as $1$ in the  Figure \ref{fig:ND_MDP_example},  that sets the initial state distribution as $p_0=\left[1\ 0 \ 0 \ 0\ 0\ 0 \right]^T$. 
The quality loss function $u$ depends on the application and area of the cloaked region.
In this example, the student is querying for information in a specified area, and the obfuscation mechanism blurs the user location by increasing the area of retrieval that indicates the quality loss grows with the area of the cloaked regions. Therefore, a   quality loss metric for this setup can be defined by $u(s,a)=\frac{\text{Area}(a)}{\text{Area}(s)}$ \cite{ku2009privacy}, where $\text{Area}(.)$ is a function that computes  area of the specified region.

 \tikzstyle{state}=[rectangle,thick,draw=black!75, fill=black!20,minimum size=2mm]
\begin{figure}[!t]
\centering
\begin{minipage}[b]{0.49\linewidth}
 \includegraphics[width=\linewidth]{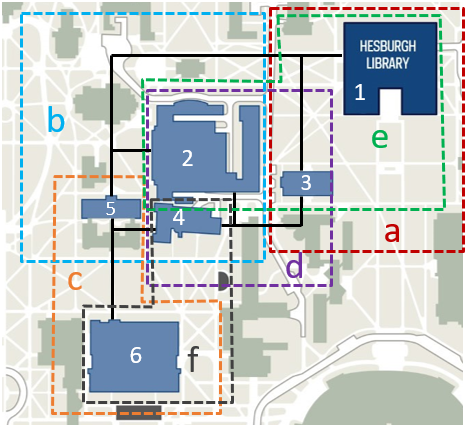}
\end{minipage}\hfill 
\begin{minipage}[b]{0.49\linewidth}
\resizebox{\linewidth}{!}{
\centering
\begin{tikzpicture}[scale=1,shorten >=1pt,node distance=2.4cm,on grid,auto, bend angle=20, thick, every node/.style={transform shape}] 
      \node[state,initial text=] (s_1)   {$s_1$};
	  \node[state] (s_2)   [left=of s_1,yshift=0cm,xshift=0cm] {$s_2$};
      \node[state] (s_3)   [below=of s_1,yshift=-1cm,xshift=0cm] {$s_3$};
      \node[state] (s_4)   [left=of s_3,yshift=0cm,xshift=0cm] {$s_4$};
      \node[state] (s_5)   [left=of s_2,yshift=0cm,xshift=0cm] {$s_5$};
      \node[state] (s_6)   [below=of s_5,yshift=-1cm,xshift=0cm] {$s_6$};
	\path[->]
	  (s_1) edge  [loop above]  node  {$a,e$} (s_1)
	  (s_1) edge  [bend right=30] node [ above] {$a,e$} (s_2)
	  (s_1) edge  [sloped] node [ ] {$a,e$} (s_3)
	  (s_2) edge  [loop above]  node  {$b,d,e$} (s_2)
      (s_2) edge  [] node [ below] {$b,d,e$} (s_1) 
      (s_2) edge  [sloped,bend right=30] node [above ] {$b,d,e$} (s_3) 
      (s_2) edge  [] node [ above] {$b,d,e$} (s_5)
      (s_3) edge  [loop below]  node  {$a,d,e$} (s_3)
      (s_3) edge  [sloped] node [ above] {$a,d,e$} (s_2)
	  (s_3) edge  [] node [ above,  pos=0.7] {$a,d,e$} (s_4)
      (s_4) edge  [loop below]  node  {$b,d,f$} (s_4)
      (s_4) edge  [bend right=30] node [ below] {$b,d,f$} (s_3)
      (s_4) edge  [sloped] node [ above] {$b,d,f$} (s_5)
      
      (s_5) edge  [loop above]  node  {$b,c$} (s_5)
      (s_5) edge  [sloped, bend left=30] node [ ] {$b,c$} (s_4)
      (s_5) edge  [ sloped, bend left=20] node [above ] {$b,c$} (s_6)
      
      (s_6) edge  [loop below]  node  {$c,f$} (s_6)
      (s_6) edge  [sloped] node [ ] {$c,f$} (s_5)
      (s_6) edge  [ ] node [ ] {$c,f$} (s_4)
                    ;
			\end{tikzpicture}
			}
\end{minipage}
\caption{Left: Notre Dame university library map.
The blue marked regions are the university libraries,  the student walking paths are marked with black lines, and the spatial cloaking regions for each library are shown with dashed line. Right: MDP user mobility model.}
\label{fig:ND_MDP_example}
\end{figure}

\section{Privacy Notation} \label{sec:privacy_model}
To evaluate an LPPM framework, it is important to define the location privacy metric and the adversary threat model. 
In the following, we assume the user's POIs contain confidential information that should be protected from an adversary, and accordingly, define the adversary threat model and the location privacy metric.

\subsection{Adversary Inferece Model} \label{subsection:Adversary_model}
We assume a LBS server is potentially an  adversary who can have access to the user's mobility  traces and other public information to construct the user's mobility  model \cite{shokri2011quantifying}. The perturbed location are also reported to the LBS server than  can be   visible to the  adversary. 
Therefore,  we consider an adversary who has a knowledge of the user's mobility model $M$,  and is capable of observing  the reported perturbed locations $A$, however, the users' true POIs or the  states $S$, are not observable to the adversary. 

The adversary objective is to perform a \textit{localization attack} that is to infer  the user presence at  his POIs \cite{shokri2011quantifying}.
In the localization attacks, the adversary obtains users obfuscated locations  that are generated probabilistically by the LPPM. 
Since for any observed obfuscated location $a \in A$, there are potentially many user's POIs that may have produced $a$,  the outcome of the adversary localization attack is a probability distribution over the user true POIs \cite{shokri2011quantifying}. Formally,  given an observation $a \in A$,  the adversary outcome for any $s \in S$ is in the form of a posterior distribution $\text{Prob}\left[s\mid a\right]$. We call this posterior distribution the adversary belief and we denote it by $b$ that is a function $b:S \to \left[0,1\right]$, such that $\sum_{s} b(s) =1$.
Furthermore, the adversary can have some background knowledge about users' mobility habit before starting his observation.  This side knowledge also can be encoded as a probability distribution over the state set \cite{ashbrook2003using} that is defined by $b_0:S \to \left[0,1\right]$, such that $\sum_{s} b_0(s) =1$.

The adversary initially at $t=0$, has a prior belief $b_0$, and at each time  instant $t$, when a set of perturbed locations $A$ are observed, the adversary updates its  belief  with Bayes rule by computing its posterior  distribution, given by:  
\begin{equation}  \label{eq:adversary_general_model}
b_{t+1} (q' \mid A) =  \sum\limits_{a \in A} p_{a_t} \sum\limits_{q \in S} T(q,a,q') b_t(q),
\end{equation}
where, $p_{a_t}=\text{Prob}\left[{a_t}=a\right]$ is the probability of $a$ at time $t$.

\subsection{Overview of Related Privacy Notations and Their Limitations}
Given the adversary threat model, LPPMs are designed and being evaluated based on an assumed location-privacy metric. 
Selection of an effective privacy metric however, highly depends on the specification of the user's location privacy requirement. Here, we first provide an overview of related location privacy notations to motivate our proposed privacy metric.  For a comprehensive review of the privacy metrics, see  \cite{wagner2018technical}.

Throughout this section, we will be using the user mobility model in Figure \ref{fig:ND_MDP_example} as a running example. We use the adversary inference model in~\eqref{eq:adversary_general_model}, and assume
its prior belief $b_0$, is a uniform distribution. 
As it is mentioned earlier,  we consider an adversary with a localization attack model, and therefore, we focus on related privacy metrics that measure user's  \textit{presents-absence disclosure} to an adversary. Note that, we  will not evaluate the utility of service associated with these privacy metrics since our objective here is to only study the limitation of these metrics to capture the outcome of the adversary localization attacks. 

Let's assume the state $s_4$ is the user's secrete area and it is critical to ensure the adversary can not determine the user presence at this area.  
We denote the user's secrete POIs by $S_s=\{s_4\}$. We furthermore consider the user's LPPM as an obfuscation mechanism $f: S \to \Delta(A)$ that determines the random mapping between the user's actual POIs $S$, and the cloaked regions $A$. Note that, we assume the adversary can accurately measure the probability distribution of any obfuscated  location $a$, and hence, we set $p_{a}=\sum_s p(s)f(a|s)$ in the adversary inference model~\eqref{eq:adversary_general_model}.

\begin{figure*}
  \begin{minipage}{.32\textwidth}
    \includegraphics[width=\linewidth]{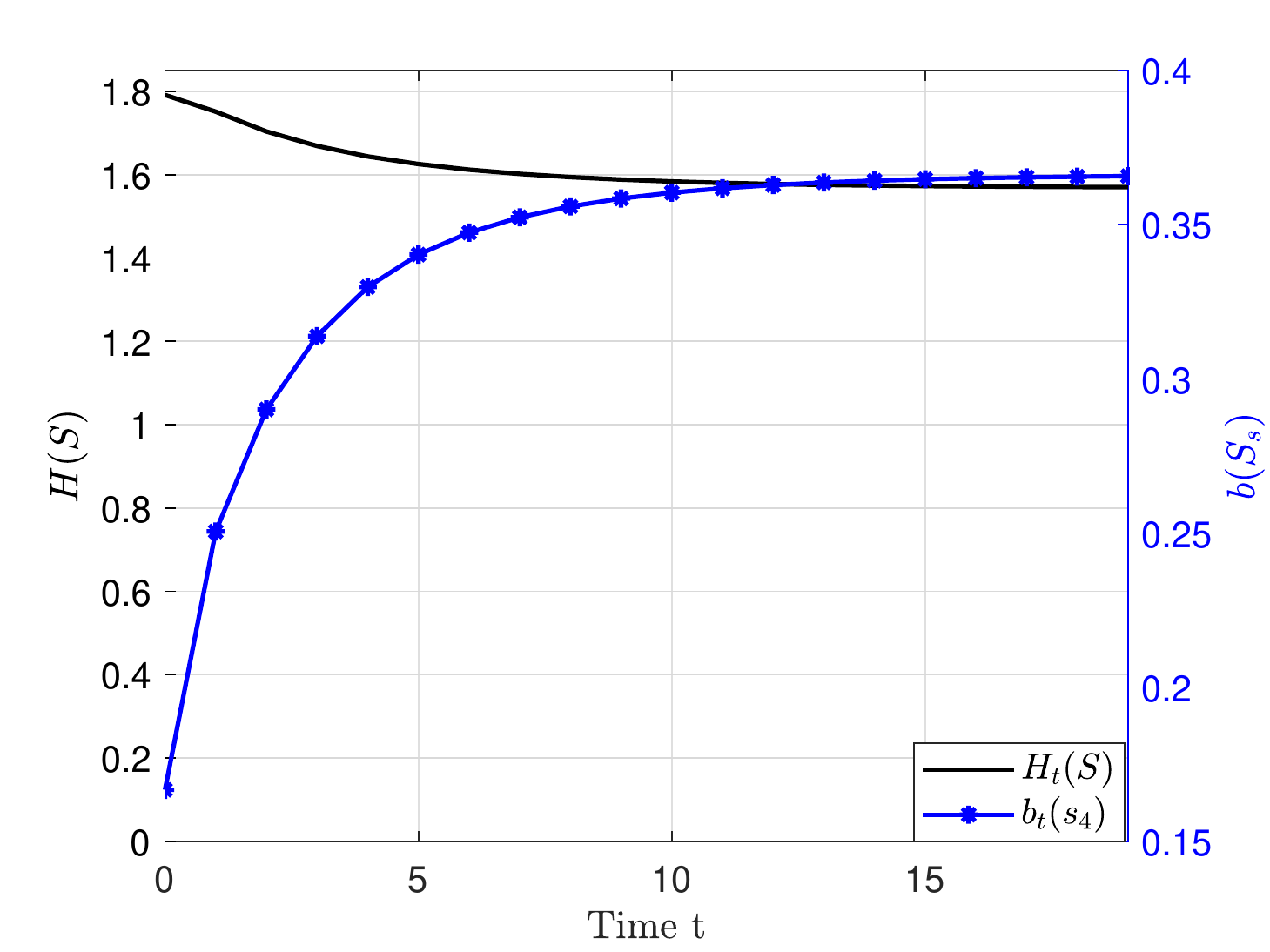}
     \caption{The entropy of the adversary belief in~\eqref{eq:posterior_entropy_with_time}. The left vertical axis represents entropy of the adversary posterior, and the right one is the adversary belief over the user secret area $S_s$.  } 
  \label{fig:example_belief_entropy}
  \end{minipage} \hfill
  \begin{minipage}{.33\textwidth}
    \includegraphics[width=\linewidth]{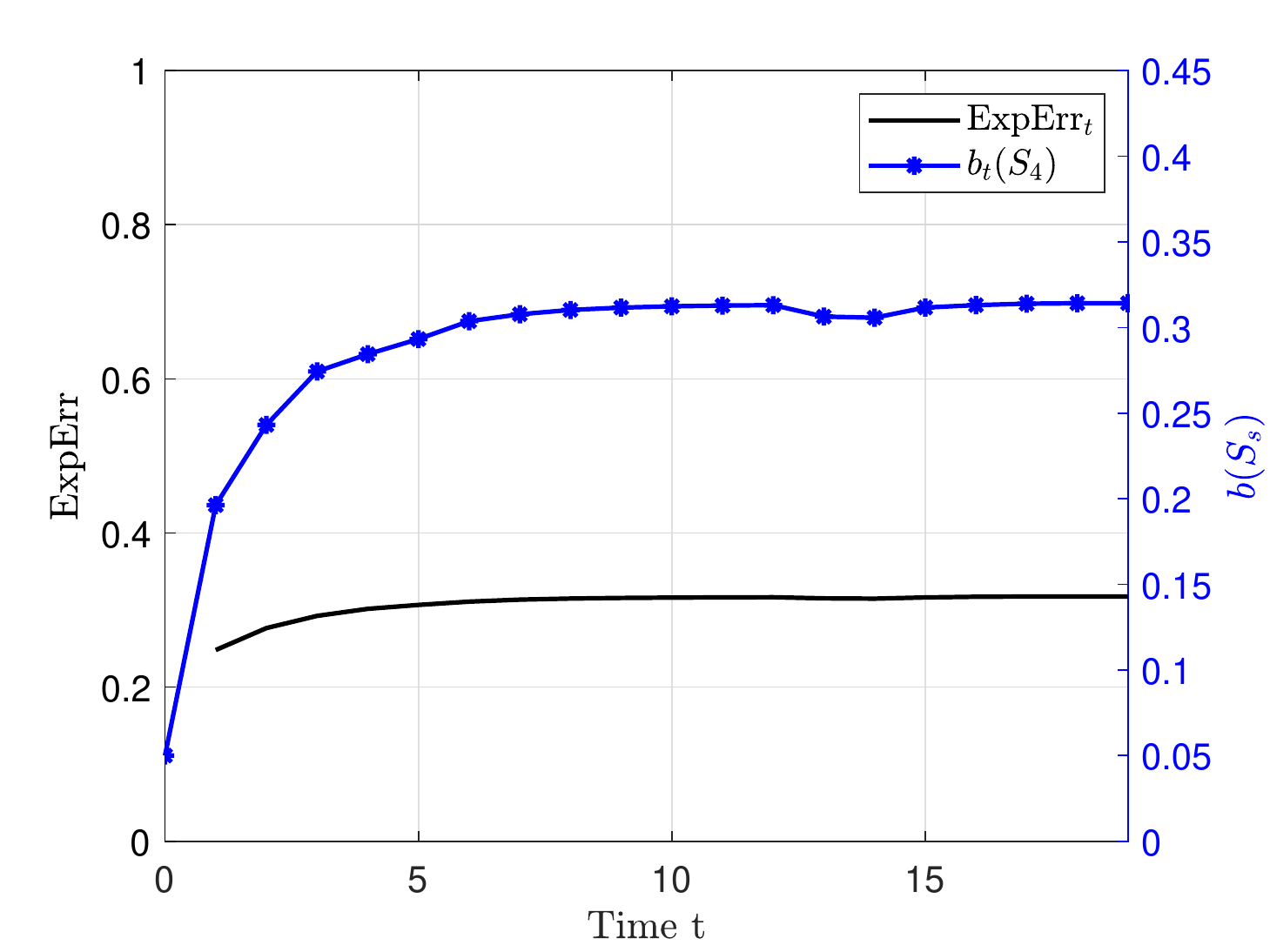}
    \caption{The adversary expected inference error~\eqref{eq:exp_inference_err_with_time}. The left vertical axis represents the minimum expected  adversary inference error, and the right one is the adversary belief over the user secret area $S_s$.  } 
  \label{fig:example_belief_exptErr}
  \end{minipage}  \hfill
 \begin{minipage}{.32\textwidth}
    \includegraphics[width=\linewidth]{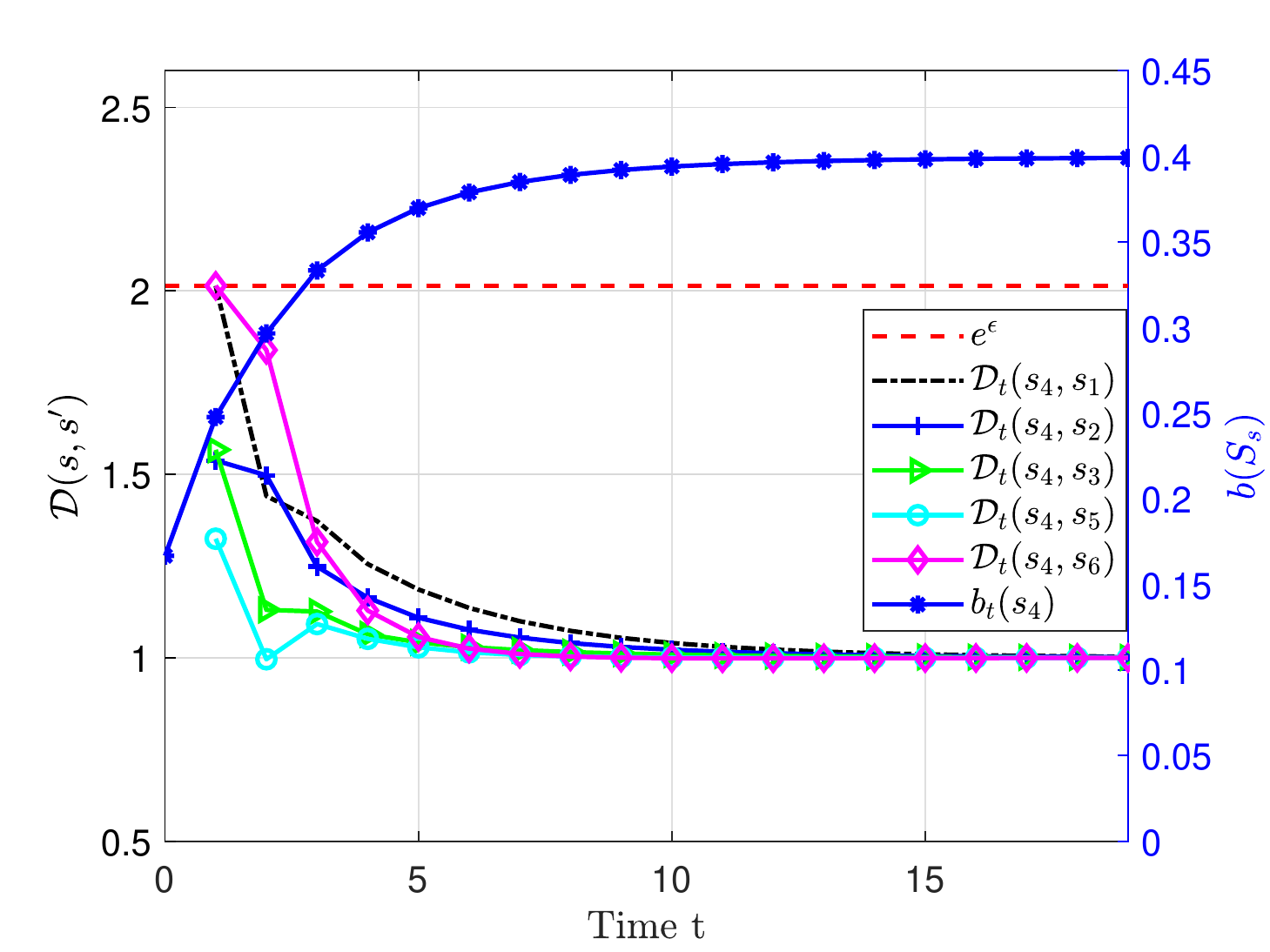}
     \caption{Adversary belief with $(D,\epsilon)-$privacy condition~\eqref{eq:diff_privac_belief_compact}. The dashed line is the $(D,\epsilon)-$privacy upper bound $e^\epsilon$.
  The left vertical axis represents $\mathcal{D}_t$  for $S_s$, and the right one is the adversary belief over $S_s$.  } 
  \label{fig:example_belief_DiffPrivacy}
  \end{minipage} 

\end{figure*}
\subsubsection{Entropy}
A common approach to define the user location privacy is to use \textit{entropy} to quantity the adversary localization attack's outcome \cite{ni2019anonymous}. 
Entropy quantifies the uncertainty associated with predicting the value of a random variable. In the location privacy, it can be interpreted as how well the adversary can determine the user's position among other possible locations. 
More specifically, entropy of the adversary belief $b$,  is defined by:
\begin{align}\label{eq:posterior_entropy}
 H(S)= \sum_s b(s) \log (1 \slash b(s)).   
\end{align}
The maximum adversary uncertainty is when $b$ is a uniform distribution that implies, the higher entropy is,  the lower certainty for the attack's outcome.  The absolute value of entropy however does not necessarily indicates how accurate the adversary attack's outcome is. For instance, an adversary can identify the user current location with high probability, but the remaining low probability of other possible locations still  results a high value of entropy \cite{toth2004measuring}. To further illustrate this limitation,  we compute the obfuscation mechanism  $f$ that maximizes the adversary entropy $H(S)$. Formally,
 \begin{align} \label{eq:posterior_entropy_with_time}
    \max_{f_t} H_t(S) \quad t \ge 1.
\end{align}
where $H_t(S)= \sum_s b_t(s) \log (1 \slash b_t(s))$. 
Figure \ref{fig:example_belief_entropy} demonstrates the adversary entropy over the user's POIs $S$, and the adversary belief over the user's secrete location $s_4$. 
As the figure illustrates, although the absolute value of $H(S)$ did not change drastically, the adversary becomes significantly more confident about the user presence at $s_4$. 
Note that, for the MDP in Figure \ref{fig:ND_MDP_example}, the maximum adversary entropy is $\log(|S|)= 1.79$. The slight reduction of $H(S)$ is due to the constraints imposed by the user's mobility model on the obfuscation mechanism $f$.

 \subsubsection{Inference Error}
Privacy metrics based on \textit{expected inference error} are widely used to characterize location privacy \cite{shokri2011quantifying}.
These approaches quantify how close the attack’s outcome is to the true user's locations.
Given the attack's outcome $b$, this privacy metric is defined by:
\begin{align}\label{eq:exp_inference_err}
 \text{ExpErr}= \min_{\hat{s}} \sum_{s \in S} b({s}) d_q( s,\hat{s}), 
\end{align}
 where $d_q(.)$ can be Hamming distance or Euclidean distance between $\hat{s}$ and $s$, or any other metric that captures the privacy loss. Here,  $\hat{s}$ is the estimated user's location, and $s$ is the user's true location. In this privacy metric, the high estimation error correlates with low location privacy, which indicates the adversary objective is to minimize the expected error of any attack's outcome. The objective LPPM for this setup is to maximize the adversary inference error given the user constraint on the service quality loss \cite{shokri2012protecting}. This privacy metric defines user privacy as a global performance metric that is averaged over all locations. Hence, this metric does not explicitly characterize the the user privacy at a certain location which can be critical for the localization attacks. 
 Similarly, we design an LPPM to illustrate this limitation. 
 Let's consider  $d_q(s,\hat{s})$ as a Euclidean distance between centroid of area $s$ and $s'$.
 We compute the following obfuscation mechanism that   maximizes~\eqref{eq:exp_inference_err} for $t\ge 1$:
 \begin{align} \label{eq:exp_inference_err_with_time}
     \text{ExpErr}_t= \max_{f_t} \min_s \sum_{\hat{s} \in S} b_t(\hat{s}) d_q(\hat{s} , s).
 \end{align}
 Figure \ref{fig:example_belief_exptErr} demonstrates how the  expected inference error has evolved with respect to the adversary belief over the user secret location $b_t(s_4)$. As the simulation result is illustrated,  the LPPM could maintain the  expected inference error of the adversary close to the initial error, but $b_t(s_4)$ has increased over time that indicates the adversary becomes more certain about the user presence at $s_4$.

\subsubsection{Differential Privacy}
Differential privacy is a rigorous mathematical framework that provides a provable privacy guarantee to protect individual data in a database.  This is achieved by adding controlled noise to the query outcome such that,  the presence (or absence) of an individual in the database will have a negligible impact on the perturbed reported answer. 
Differential privacy techniques have been used in the context of LBS by considering the user's location as the sensitive information and using the location obfuscation mechanisms to produce the query noise. These obfuscation mechanisms are designed  such that  changing from one location to another nearby location makes the probability distribution of the reported locations to only change to a  certain extent \cite{elsalamouny2016differential}.

Depending on the exact notion of indistinguishability between the user's locations, several differential privacy models have been introduced. 
In $\epsilon-$geo-indistinguishability models   \cite{andres2012geo}, the indistinguishably between two adjacent locations proportionally increases with the distance between them. 
Formally, the obfuscation mechanism $f$ satisfies $\epsilon-$geo-indistinguishability if $ {f(a|s)}\slash {f(a|s')} \le \exp{(\epsilon d_q(s,s'))} $. 
This property means, the (log of) the ratio between the probability of reporting the obfuscated location  $a$, at any pair of user's locations, is always bounded by $\epsilon d_q$.

Another popular model based on the differential privacy is $(D,\epsilon)-$location privacy \cite{elsalamouny2016differential}. In this model, indistinguishably is defined for any arbitrary locations within a distance of a predefined value $D$.  Formally, $(D,\epsilon)-$location privacy holds for the mechanism $f$ if ${f(a|s)}\slash{f(a|s')} \le \exp({\epsilon})$ for all $s,s' \in S$ such that $d_q(s,s') \le D.$
This characterization implies the (log of) the ratio of between the probability of reporting the perturbed location $a$ in any two adjacent locations is at most $\epsilon$. These two privacy notation are  correlated \cite{elsalamouny2016differential}, and  have been extended  to characterize other type of location indistinguishably \cite{chatzikokolakis2017methods}. However, our objective here is to evaluate the differential-privacy based metrics for the described localization attack. Therefore, for the sake of simplicity and clarity,  we focus our analysis on the specific model  of $(D,\epsilon)-$privacy. Particularly, we select $D$ such that $d_q(s,s')\le D$ holds for any pair of states $s,s'\in S$.
Differential privacy in general and $(D,\epsilon)-$privacy, in particular, are solely described by the obfuscation mechanism, and hence, they can guarantee the user privacy independent of the adversary prior knowledge. However,  \cite{elsalamouny2016differential} has shown that it might not be possible to achieve $(D,\epsilon)-$privacy for any adversary prior knowledge while receiving a reasonable quality of service. Therefore, it is important to characterize $(D,\epsilon)-$privacy as the adversary knowledge gain. More precisely,  let's consider $b_{t+1}$ as the adversary posterior and $b_{t}$ as the prior belief. The obfuscation mechanism $f$ satisfies $(D,\epsilon)-$privacy at time $t$, if and if the adversary posterior and prior belief  satisfy \cite{elsalamouny2016differential}:
\begin{align} \label{eq:diff_privac_belief}
    \frac{b_{t+1}(s)}{b_{t+1}(s')} \le e^\epsilon   \frac{b_{t}(s)}{b_{t}(s')} \quad \forall s,s' \in S.
\end{align}
As~\eqref{eq:diff_privac_belief} implies, $(D,\epsilon)-$privacy provides location privacy by restricting the relative gain of the adversary knowledge  about the user's true location. However, this model may not be able to quantify how close the adversary belief gets to the user's true location over time. We illustrate  this limitation through the student mobility model in Figure \ref{fig:ND_MDP_example}. For any $s,s' \in S$, and time $t$, let's rewrite~\eqref{eq:diff_privac_belief} as:
\begin{align}\label{eq:diff_privac_belief_compact}
    \mathcal{D}_t({s},{s'})=\frac{ b_{t+1}(s)b_{t}(s') }{ b_{t+1}(s')b_{t}(s)} \le e^\epsilon,
\end{align}
where $b_{t+1}$ can be obtained from the adversary inference model~\eqref{eq:adversary_general_model}. 
we set $\epsilon=0.7$, and   find $f$ that makes the adversary belief $b_{t+1}$, to  satisfy~\eqref{eq:diff_privac_belief_compact} at any time $t$.
Figure \ref{fig:example_belief_DiffPrivacy} demonstrates how $\mathcal{D}_t$ evolves  for the secrete location $s_4$. As it is depicted in the figure,  the  relative adversary knowledge gain over $s_4$,  captured by $\mathcal{D}_t(s_4,s)$ for any $s\in \{S - s_4\}$, meets the $(D,\epsilon)-$privacy criteria~\eqref{eq:diff_privac_belief_compact}. However, the absolute value of the adversary belief over the secrete location, $b_t(s_4)$, is kept increasing over time. Hence, although the relative adversary knowledge gain is upper bounded by $e^\epsilon$, and is safe in a sense of $(D,\epsilon)-$privacy, the adversary can still combine the reported obfuscated locations and the user mobility model to further improve his knowledge about the user secrete location.

Roughly speaking, the underlying limitation of these metrics to capture the adversary knowledge over a certain location relies on quantifying the privacy based on the location indistinguishably. 
Motivated by probabilistic current state opacity (CSO) metric \cite{Wu2020}, we propose to define the user location privacy as the absolute value of the adversary belief over the user's secrete area.

\subsection{Location Privacy in Belief Space}
To quantify the level of protection offered by an LPPM against the adversary, we propose to use a probabilistic current state opacity metric (CSO). In the CSO framework, the goal is to protect the system states that contain sensitive information, namely \textit{secret} states, against an external observer or an adversary  \cite{Wu2020}.
The adversary maintains a belief
over the system secrets through Bayesian inference, and the system is considered CSO safe if the adversary confidence that a secret has been observed is bounded. 
In the LPPM framework, in analogs to secret states, the users may only need to protect a subset of POIs which contains users semantic information that can be used by an adversary for profiling or other attacks. Examples include health clinics, places with religious significance, etc. \cite{ashbrook2003using}. Therefore, we propose to use the CSO definition to quantify users' POIs privacy against an adversary with the intention of localization attack.
Formally, suppose there is a subset of states in the MDP $M$, representing sensitive POIs of a user, $S_{s} \subset S$, that LPPM would like to conceal from the adversary. Note that, for $S_{s} =S$, the problem is trivial.

Our CSO inspired location privacy metric is required that the adversary belief of $M$ being in the secrete states $S_{s}$, is upper bounded by a constant $\epsilon \in \left[0,1\right]$. 
We denote this privacy notation as a $\epsilon-$\textit{privacy} and formalize it in the following definition.

\begin{definition}
 Given user mobility pattern modeled by a MDP $M=(S,A,T,p_0,u)$, set of secrete  POIs, $S_{s} \subset S$, and $\epsilon > 0$, $M$ is $\epsilon-$private under privacy-preserving policy $\mu$, if
  \begin{equation}\label{eq:CSO_opacity}
\sum_{s \in S_{s}} {b_t(s)}   \le \epsilon \quad t> 0.
 \end{equation}
\end{definition}
 This notation of privacy characterizes inference evaluation of the adversary observation over the user's current true location for all time. In this setup, the location privacy of the user is preserved if the adversary confident is lower than some desired threshold. 
 Note that, in general, we cannot prevent the adversary from having prior information about the user's mobility pattern, and therefore we assume the system $M$ is $\epsilon$-private at $t=0$. However, our goal is to control the additional information that the adversary obtains by observing the released location information by the user.

\section{Location-Privacy Protection Problem} \label{sec:LPP_problem}
In our framework, the user location protection mechanism against the adversary localization attack is based on the randomization of the obfuscation mechanisms that is provided by a \textit{privacy-preserving policy}.
\subsection{Privacy-preserving Policy} 
A \textit{stationary} stochastic privacy-preserving is a function $\mu: S \times A \to \left[0,1\right]$, that assigns a probability distribution over the actions of each state $s \in S$. 
We denote  the policy  \textit{decision matrix} as $\mathbf{\mu} \in \mathbb{R}^{n,m}$, where each element of $\mathbf{\mu}$ in $s$th row and $a$th column  is defined by $\mu(s,a)$.
MDP $M$ is said to follow a  policy $\mu$, if for any state $s$, it draws an action $a$ from $\mu$.
Therefore, for any $s\in S$, and $a \in A$, we have $\mu(s,a)=\text{Prob}\left[a \mid s\right]$.
Note that  $\mu$ induces a Markov chain (MC) $\text{MC}=(S,M_{\mu},p_0 )$ \cite{puterman2014markov} with transition matrix $M_{\mu} \in \mathbb{R}^{n,n}$, where its elements for any $s,s' \in S$ are:
\begin{equation} \label{eq:M_mu}
    M_{\mu}(s,s')=\sum\limits_{a \in A} \mu(s,a) T(s,a,s').
\end{equation}
Let $T^a \in \mathbb{R}^{n,n}$ be a row stochastic matrix having elements $ T(s,a,s')$ for any $s,s' \in S$. The induced MC transition matrix $ M_{\mu}$ in \eqref{eq:M_mu}, can be written as \cite{el2015finite}:
\begin{equation} \label{eq:M_mu_compact}
    M_{\mu}=\sum\limits_{a \in A}  T^a \odot \left( (\mathbf{\mu} e_a) \mathbf{1}^T\right),
\end{equation}
where $\mathbf{1}$ is a column vector of all ones, $e_a$ is a vector with the $a$th element as $1$ and the rest are all zeroes, and $\odot$ is a element-wise \textit{Hadamard} product.
For each state $s\in S$, let's denote $p_t(s)=\text{Prob}\left[s_t=s\right]$ as the  probability of the system being at state $s$ at time $t$, and let $p_t \in \mathbb{R}^n$ be the vector of  $p_t(s)$ for all $s \in S$. The state probability distribution of $\text{MC}$ evolves according to the following dynamic equation:
\begin{equation} \label{eq:stationary_dist}
   p_{t+1}=M^T_{\mu} p_t.
\end{equation}
\begin{rem}
 Considering stationary privacy-preserving policy,  helps the LPPM to avoid recomputing the policy for every change of the adversary knowledge  that dramatically reduces the computational overhead on the user's mobile device. 
\end{rem}

When the policy $\mu$ is stationary, the induced MC transition matrix, $M_\mu$, is time-independent.
For finite state MC with transition matrix $M_{\mu}$, the \textit{stationary distribution}, $p_{\infty}$, is a   row probability vector satisfying: 
\begin{align}
    p_{\infty}=M_\mu p_{\infty},
\end{align}
that implies if after some time $t'$, $p_{t'}= p_{\infty}$, then for all $t' \ge t$, we have $p_{t'}= p_{\infty}$.
 $M_\mu$ is called \textit{ergodic}, if there exists a unique invariant and strictly positive stationary distribution $p_{\infty}$, such that  independent of the initial distribution,
$p_0$, the state probability distribution, $p_t$, converges to $p_{\infty}$, i.e., $\lim_{t \to \infty} p_t = p_{\infty}$ \cite[\S8.3]{puterman2014markov}.

\begin{definition}\label{def:unichain}
 A MDP $M$ is called \textit{unichain}, if for each policy $\mu$, the Markov chain induced by $\mu$ is {ergodic} \cite{puterman2014markov}.
\end{definition}
Here, we assume the MDP $M$ representing the user mobility pattern  is a unichain MDP.
This indicates that for any policy $\mu$, the induced MC, $M_\mu$, has  a single \textit{recurrent class} plus a possibly empty set of  \textit{transient states} (see \cite[\S8.3]{puterman2014markov} for an elementary exposition of classification of MDP).
A recurrent state in the user mobility model is translated as a user POI $s$, which is accessible from all the POIs that are, in turn, accessible from the $s$.
Therefore, restricting $M_\mu$ to a single recurrent class of POIs, implies the user can travel to any POI from any other POIs which we believe is not a restrictive assumption on the user mobility model.

\subsection{Performance Metrics}
Performance metric measures overall efficiency of the framework for a given LPPM and  quality loss model. 
When the privacy preserving policy frequently makes a  decision in a long-run, it might be more preferable to compare LPPMs on the basis of their {average expected quality loss}.  We therefore consider  \textit{average quality loss} criterion as as a performance metric \cite{puterman2014markov}. Formally, for any stationary policy $\mathbf{\mu}$, and initial distribution $p_0$, the expected total  quality loss is defined as follows:
\begin{align} \label{eq:v_mu}
    v_{\mu}= \lim_{N\to \infty} \frac{1}{N} \mathbb{E}^{\mu}_{p_0}\left[\sum\limits_{t=1}^{N}u(s_t,a_t)\right],
\end{align}
This expression may be undefined when the limit does not exist, however it known that 
for unichain MDP, the existence of this limit is guaranteed \cite[\S8.2]{puterman2014markov}. 
Here, LPPM objective is to minimize the  overall quality loss while the user privacy is preserved according to the privacy condition~\eqref{eq:CSO_opacity}.

\subsection{Problem Formulation}
One the objective of the LPPM  is to find an optimal privacy-preserving policy $\mathbf{\mu}^*$ that minimizes  $v_{\mu}$.  
Formally, $v^*_{\mu}=\min_{\mathbf{\mu}} v_{\mu} $, and the optimal privacy-preserving policy is $\mu^*= \argmin_{\mathbf{\mu}} v_{\mu}$. If there is no constraint on the  states and actions, this problem becomes a well-studied unconstrained MDP planning, and it can be solved by value-iteration algorithms \cite{puterman2014markov}. 
However, here we have location privacy constraints that have to be considered in the  policy synthesis. 
The problem of privacy-preserving optimal policy synthesis is given below.

\begin{problem}\label{prob:alltime_CSO_optimal_policy}
Given a unichain MDP $M=(S,A,T,p_0,u)$ with average total quality loss criterion $v_\mu$, synthesize an optimal privacy-preserving  policy $\mathbf{\mu}$ that  minimizes  $v_\mu$, and  ensures $M$ is $\epsilon-$private  for  $t \ge 0$.
\end{problem}

\section{Optimal Privacy-preserving Policy Synthesis} \label{sec:all_time_LPPM}
In this section, we study the design of privacy-preserving  optimal  policy for Problem \ref{prob:alltime_CSO_optimal_policy}. Our underlying idea is to formulate the policy synthesis problem into a  linear programming (LP) problem.
Given a MDP $M$ controlled by the policy $\mathbf{\mu}$, the average quality loss metric~\eqref{eq:v_mu} can be defined by 
$ v_{\mu}= \sum_{s,a} \mu(s,a) p_{\infty}(s)u(s,a)$ \cite[\S8.4]{puterman2014markov}, where $p_{\infty}$ is the unique stationary distribution of $M_\mu$.
Let's  denote joint state-action distribution as $\theta(s,a)=p_{\infty}(s) \mu(s,a)$, 
and in a matrix form, by $\mathbf{\theta} \in \mathbb{R}^{n,m}$, where $\sum_{s,a} \theta(s,a)=1$. 
The expected total quality loss metric~\eqref{eq:v_mu} then can be re-written as: 
\begin{align} \label{eq:v_mu_average_simple}
    v_{\mu}= \sum\limits_{s,a}  \theta(s,a) u(s,a),
\end{align}
which is  a linear function of  $\mathbf{\theta}$ \cite[\S8.4]{puterman2014markov}.
The stationary distribution requirement~\eqref{eq:stationary_dist} for $M_\mu$ similarly can be driven based on $\theta$, defined by:
\begin{align} \label{eq:state-action_stionary_dist_const}
    \sum_a \theta(s',a) = \sum_{s,a} \theta(s',a) T(s,a,s'), \quad \forall s' \in S.
\end{align}

The adversary belief also can be modeled as function of state-action distribution $\mathbf{\theta}$.  We considered an adversary who has access to the history of released user's locations, and therefore estimates the  distribution of the observed actions as $\hat{p}_{a_t}= \sum_s \mu(s_t,a_t)p(s_t)$.
The adversary then updates its belief  for any state $q'\in S$, based on a Bayesian
rule \cite{Bo8392381}, that is defined by the following equation:
\begin{equation}  \label{eq:adversary_model_with_p(a)}
 b_{t+1} (q'\mid A) =  \sum\limits_{a \in A} \sum\limits_{s \in S} \sum\limits_{q \in S} p_t(s)\mu(s,a)   T(q,a,q') b_t(q).
\end{equation}

However, since we consider a stationary  policy $\mu$, the user released locations are solely the function of the user's POIs, and therefore the adversary estimation of ${p}_t(a)$ in~\eqref{eq:adversary_general_model}, denoted by $\hat{p}(a)$, is time independent. Hence, we have $\hat{p}(a)= \sum_{s} p_\infty (s) \mu(s,a)$, that makes the adversary update dynamic model to take the following form:
\begin{align} \label{eq:adversary_model_unichain}
 b_{t+1} (q'\mid A) =  \sum\limits_{a \in A} \sum\limits_{s \in S} \sum\limits_{q \in S} \theta(s,a)   T(q,a,q') b_t(q).
\end{align}
Let's $b_t \in \mathbb{R}^n$ be the vector of adversary belief over state set $S$ at time instant $t$.  
The adversary belief $b_t$ can be seen as a state probability distribution of a MC with a dynamic expressed  in the following Lemma.
\begin{lemma} \label{lem:bt_dynamic}
The adversary belief $b_t$ evolves according to the following MC dynamic:
\begin{align} \label{eq:bt_dynamic}
  b_{t+1}=M^T_{adv} b_t, \quad t=0,1,\dots
\end{align}
where the adversary transition matrix is 
\begin{align} \label{eq:M_p_infty,mu_unichain}
M_{adv}=\sum\limits_{a} T^a \odot \left(  (\mathbf{1}^T \mathbf{\theta} e_a)  \mathbf{1} \mathbf{1}^T \right). 
\end{align}
\end{lemma}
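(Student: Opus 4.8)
The plan is to establish~\eqref{eq:bt_dynamic} by showing that the componentwise Bayesian update~\eqref{eq:adversary_model_unichain} is literally a linear recursion $b_{t+1}=M_{adv}^T b_t$, and then to check that the compact Hadamard expression~\eqref{eq:M_p_infty,mu_unichain} produces exactly the transition matrix that this recursion forces. No analytic machinery is required: the entire argument is a reindexing of finite sums together with one scalar-extraction observation. The only place demanding care is the bookkeeping of row/column conventions, since $T^a$ has $(s,s')$-entry $T(s,a,s')$ and is row stochastic, whereas the belief update will involve its transpose.

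First I would start from~\eqref{eq:adversary_model_unichain} in componentwise form, $b_{t+1}(q')=\sum_{a}\sum_{s}\sum_{q}\theta(s,a)\,T(q,a,q')\,b_t(q)$. The inner sum over $s$ factors out because $\theta(s,a)$ depends on neither $q$ nor $q'$; introducing the marginal action probability $\hat{p}(a):=\sum_{s}\theta(s,a)=\sum_{s}p_\infty(s)\mu(s,a)$, which is precisely the time-independent observation distribution identified just before~\eqref{eq:adversary_model_unichain}, the update collapses to $b_{t+1}(q')=\sum_{q}\bigl(\sum_{a}\hat{p}(a)\,T(q,a,q')\bigr)b_t(q)$. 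Reading off the bracketed quantity as the $(q,q')$-entry of a matrix, $M_{adv}(q,q')=\sum_{a}\hat{p}(a)\,T(q,a,q')$, gives $b_{t+1}(q')=\sum_{q}M_{adv}(q,q')\,b_t(q)=(M_{adv}^T b_t)(q')$, which is exactly~\eqref{eq:bt_dynamic}.

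It then remains to verify that~\eqref{eq:M_p_infty,mu_unichain} yields this same $M_{adv}$. The key observation is that $\mathbf{1}^T\mathbf{\theta}e_a$ is a scalar: $\mathbf{\theta}e_a$ selects the $a$-th column $(\theta(s,a))_s$ of $\mathbf{\theta}$, and left-multiplying by $\mathbf{1}^T$ sums its entries, giving exactly $\hat{p}(a)$. Hence $(\mathbf{1}^T\mathbf{\theta}e_a)\,\mathbf{1}\mathbf{1}^T$ is the constant $n\times n$ matrix every entry of which equals $\hat{p}(a)$, so its Hadamard product with $T^a$ reduces to the scalar multiple $\hat{p}(a)\,T^a$. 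Summing over $a$ gives $M_{adv}=\sum_{a}\hat{p}(a)\,T^a$, whose $(q,q')$-entry is $\sum_{a}\hat{p}(a)\,T(q,a,q')$, matching the matrix read off in the previous step. This parallels the derivation of the induced-chain matrix~\eqref{eq:M_mu_compact}, the one structural difference being that the per-state weight $\mu(s,a)$ there is replaced here by the state-independent action marginal $\hat{p}(a)$, reflecting that the adversary updates its belief using only the observable action distribution rather than the true state. I expect the main obstacle to be purely notational, namely placing the transpose in~\eqref{eq:bt_dynamic} consistently with the $(s,s')$ indexing of $T^a$ and justifying the reduction of the Hadamard product to scalar multiplication; both are immediate once the scalar nature of $\mathbf{1}^T\mathbf{\theta}e_a$ is recognized.
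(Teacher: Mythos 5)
Your proposal is correct and follows essentially the same route as the paper's own proof: both read off the entries $M_{adv}(q,q')=\sum_a\hat{p}(a)T(q,a,q')$ from the componentwise update~\eqref{eq:adversary_model_unichain} and observe that the Hadamard expression~\eqref{eq:M_p_infty,mu_unichain} collapses to $\sum_a\hat{p}(a)T^a$ because $\mathbf{1}^T\mathbf{\theta}e_a$ is the scalar $\hat{p}(a)$; yours simply spells out the reindexing that the paper leaves implicit. The only piece of the paper's argument you omit is the check that $M_{adv}$ is row stochastic, which is what licenses calling~\eqref{eq:bt_dynamic} a Markov chain dynamic; it follows in one line from $\sum_{q'}T(q,a,q')=1$ and $\sum_{s,a}\theta(s,a)=1$, so you should append that remark.
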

\begin{proof}
 For all $q,q' \in S$, let's define the elements of $M_{adv}$ as: 
 \begin{align*}
M_{adv}(q,q')=\sum\limits_{a} T^a_{qq'} \odot \left(  (\mathbf{1}^T \theta(q,a) e_a)  \mathbf{1} \mathbf{1}^T \right),
 \end{align*}
then \eqref{eq:bt_dynamic} follows from definition of $b_{t+1}$ given in~\eqref{eq:adversary_model_unichain}. Furthermore, all the elements of $M_{adv}$ are non-zero, and $\sum_{q'} M_{adv}(q,q')=1$, and therefore $M_{adv}$ is a stochastic matrix and hence \eqref{eq:bt_dynamic} is a Markov chain model. 
\end{proof}

In order to enforce the $\epsilon-$privacy requirement on  $M$ with stationary policy $\mu$, we first find the necessary and sufficient conditions that makes $M$ to be $\epsilon-$private.
Let's denote $\mathbf{0}$ as  a zero matrix of appropriate dimension, and $\beta > (\ge) G$ as $\beta_{i,j} > (\ge) G_{i,j}$ for all $i,j$ elements of matrices $\beta$, and $G$. 

\begin{thm}
  For every $b_{0}$  satisfying $A_s b_0 \le \epsilon$, $M$ is $\epsilon-$private under policy $\mu$ for $t\ge 0$, if and if there exists $z \in \mathbb{R}$ and $\beta \in \mathbb{R}^{1,n}$ such that:
\begin{align} \label{eq:CSO_NSCond_them}
  &    -\mathbf{1}^T\epsilon z + A_s z -  A_s  M^T_{adv } -\beta \ge - \mathbf{1}^T \epsilon, \quad \beta \ge \mathbf{0}, z \ge 0.
 \end{align}
\end{thm}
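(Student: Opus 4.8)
The plan is to read $\epsilon$-privacy-for-all-time as forward invariance of a polytope under the linear belief recursion of Lemma~\ref{lem:bt_dynamic}, and then to turn that invariance into the stated inequality by linear-programming duality. Let $A_s \in \mathbb{R}^{1,n}$ denote the row indicator vector of the secret set, so that $A_s b = \sum_{s \in S_s} b(s)$, and set $P = \{ b \in \Delta(S) : A_s b \le \epsilon \}$. By Lemma~\ref{lem:bt_dynamic} the belief obeys $b_{t+1} = M_{adv}^T b_t$ with $M_{adv}$ row-stochastic; since $M_{adv} \mathbf{1} = \mathbf{1}$ and $M_{adv} \ge \mathbf{0}$, the map $b \mapsto M_{adv}^T b$ carries $\Delta(S)$ into itself, so the simplex constraints $\mathbf{1}^T b = 1$, $b \ge \mathbf{0}$ are preserved automatically at every step. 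Hence the only constraint that can fail along a trajectory is $A_s b_t \le \epsilon$, and the requirement ``$M$ is $\epsilon$-private for all $t \ge 0$ and every admissible $b_0$'' is equivalent to one-step invariance of $P$: the forward direction is an induction, and the converse follows by taking $t = 1$ in the all-$b_0$ quantification, which is exactly the statement that every $b \in P$ maps into $P$.

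Because the simplex part is free, one-step invariance is equivalent to requiring that the linear program
\begin{align*}
\max_{b} \; A_s M_{adv}^T b \quad \text{subject to} \quad A_s b \le \epsilon, \; \mathbf{1}^T b = 1, \; b \ge \mathbf{0}
\end{align*}
have optimal value at most $\epsilon$. This program is feasible and bounded, since $P$ is a nonempty (by hypothesis $A_s b_0 \le \epsilon$ is solvable) compact subset of the simplex, so strong duality applies. Assigning a multiplier $z \ge 0$ to $A_s b \le \epsilon$, a free multiplier $\lambda$ to $\mathbf{1}^T b = 1$, and reading the sign condition $b \ge \mathbf{0}$ as dual inequalities, the dual is to minimize $\epsilon z + \lambda$ subject to $z A_s + \lambda \mathbf{1}^T \ge A_s M_{adv}^T$. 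By strong duality the primal optimum is $\le \epsilon$ if and only if the dual has a feasible point of value $\le \epsilon$, i.e. there exist $z \ge 0$ and $\lambda$ with $z A_s + \lambda \mathbf{1}^T \ge A_s M_{adv}^T$ and $\epsilon z + \lambda \le \epsilon$.

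It remains to eliminate $\lambda$ and expose the slack $\beta$. For fixed $z$, a coordinate-wise inspection shows that a feasible $\lambda$ exists precisely when $\epsilon(1-z)\,\mathbf{1}^T \ge A_s M_{adv}^T - z A_s$, obtained by raising $\lambda$ to the largest value $\epsilon(1-z)$ still permitted by $\epsilon z + \lambda \le \epsilon$; substituting $\lambda = \epsilon(1-z)$ collapses the two dual conditions into the single vector inequality $z A_s + \epsilon \mathbf{1}^T - \epsilon z \mathbf{1}^T - A_s M_{adv}^T \ge \mathbf{0}$. Introducing a nonnegative slack $\beta \ge \mathbf{0}$ and moving $\epsilon \mathbf{1}^T$ to the right-hand side rewrites this as $-\mathbf{1}^T \epsilon z + A_s z - A_s M_{adv}^T - \beta \ge -\mathbf{1}^T \epsilon$ together with $z \ge 0$, which is the claim. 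I expect the main obstacle to be the careful bookkeeping in this last step—keeping the row/column orientation of $A_s$, $\mathbf{1}^T$, and $M_{adv}^T$ consistent, and verifying that pushing $\lambda$ to its binding value $\epsilon(1-z)$ loses no generality—alongside the verification that quantifying over all admissible $b_0$ is genuinely equivalent to the single-step invariance driving the LP reduction.
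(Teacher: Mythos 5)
Your proposal is correct and follows essentially the same route as the paper: both reduce all-time $\epsilon$-privacy to one-step invariance of the polytope $\{b\in\Delta(S): A_s b\le\epsilon\}$ under $b\mapsto M_{adv}^T b$, encode that invariance as the linear program $\max\,A_s M_{adv}^T b$ over that polytope having value at most $\epsilon$, and obtain \eqref{eq:CSO_NSCond_them} by strong LP duality with a slack variable $\beta$. The only difference is bookkeeping — you use a single free multiplier $\lambda$ for the constraint $\mathbf{1}^T b=1$ and push it to its binding value $\epsilon(1-z)$, whereas the paper splits it into two nonnegative multipliers $y_1,y_2$ and sets $y=y_2-y_1$ — which yields the same inequality.
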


\begin{proof}
We first find the necessary and suffocate conditions on $M_{adv}$ that ensures if $A_s b_0 \le \epsilon$, then $A_s b_t \le \epsilon$ for all $t >0$.
This  requirement can be written as:
\begin{align} \label{eq:CSO_verf_optimization_cond}
  &  \max \limits_{b \in \Delta(S), A_s b \le \epsilon} A_s M^T_{adv} b \le \epsilon.
\end{align}

This condition implies that any prior belief $b$ that respects the $\epsilon-$privacy condition, it makes  the posterior belief  to satisfy the $\epsilon-$privacy requirement.  In a standard form, it can be written as:  
 \begin{align}
  &\min \limits_{b} c \cdot b , \text{ s.t. }  \hat{A}_s b \ge d, b \ge 0,
 \end{align}
 
where $c=-A_s M^T_{adv} $, $d=\begin{bmatrix}-\epsilon & -{1} & {1} \end{bmatrix}^T$, and  $ \hat{A}_s=\begin{bmatrix} -A^T_s &  -\mathbf{1} & \mathbf{1} \end{bmatrix}^T$.
In order to evaluate the feasibility of prime optimization problems, let's first introduce  slack variable $\lambda \ge 0$ as $A_s b + \lambda = \epsilon$. 
Since $\epsilon > 0$, then we can always find $b \ge \mathbf{0}$, and $\lambda \ge 0$ such that $ \mathbf{1}^T b ={1}, A_s b + \lambda = \epsilon$, has a solution. Hence,  the prime optimization problem is feasible, and dual of this optimization problem takes the form: 
\begin{align}
  &  \max \limits_{z,y} - \epsilon z - y_1 + y_2\\
  & \text{s.t. } -A_s z - \mathbf{1}^T y_1 + \mathbf{1}^T y_2  \le - A_s M^T_{adv},\nonumber \\ 
  & z,y_1 \ge 0 ,y_2\ge 0.\nonumber 
\end{align}

In a similar way, let's denote  slack variable $\beta \in \mathbb{R}^{1,n}$,   $\beta \ge \mathbf{0}$ that results:
     \begin{align} \label{eq:ver_opc_eq_const}
  -A_s z - \mathbf{1}^T y_1 + \mathbf{1}^T y_2  + A_s M^T_{adv} +\beta =  \mathbf{0}.
\end{align}
 It is clear that the quality constraints has a solution for $ \beta\ge \mathbf{0}$, and $z,y_1,y_2\ge 0$ that implies the strong duality holds,  i.e., $-A_s M_{adv} b =  - \epsilon z - y_1 +y_2$. 
 Let's denote $y=y_2-y_1$, then from  equality condition \eqref{eq:ver_opc_eq_const}, we have
$\mathbf{1}^Ty = A_s z -  A_s M^T_{adv} -\beta$, that implies dual of the optimization  takes the form: 
\begin{align} \label{eq:dual_prob_cso_verf_temp}
  &  \max \limits_{z,y,\beta} \quad  - \epsilon z + y\\ \label{eq:dual_prob_cso_verf_temp_const}
  & \text{s.t. }\mathbf{1}^Ty = A_s z -  A_s M^T_{adv} -\beta,  \\  
  & \beta \ge \mathbf{0},  z \ge 0,y \text{ unconstrained}. \nonumber
\end{align}
 This is  equivalent to set of  LP  problems, given by: 
\begin{align} \label{eq:dual_prob_cso_verf}
  &  \max \limits_{z \ge 0,\beta \ge \mathbf{0}}   \left[-\mathbf{1}^T\epsilon z + A_s z -  A_s M^T_{adv} -\beta\right] e_i,  
\end{align}
where $e_i$  is a vector with the $i$th element as $1$ and the rest are all zeroes, and $i \in \{1,\dots,n\}$. Here by equivalent we mean if there exists a feasible solution that maximizes~\eqref{eq:dual_prob_cso_verf_temp}, that must also maximize all the linear programming problems in~\eqref{eq:dual_prob_cso_verf}. Note that~\eqref{eq:dual_prob_cso_verf} 
is obtained by multiplying $\mathbf{1}^T$ to~\eqref{eq:dual_prob_cso_verf_temp}, and substituting  $\mathbf{1}^T y$ with~\eqref{eq:dual_prob_cso_verf_temp_const}.  
Since strong duality holds and duality gap is zero, the necessary and sufficient condition is using the LP problem  expressed in~\eqref{eq:CSO_verf_optimization_cond}. Hence the optimal solution of~\eqref{eq:dual_prob_cso_verf} must satisfy:
\begin{align} \label{eq:necc_suff_cond_CSO_verf}
  &    -\mathbf{1}^T\epsilon z + A_s z -  A_s  M^T_{adv} -\beta \ge - \mathbf{1}^T \epsilon, \quad \beta \ge \mathbf{0}, z \ge 0.
 \end{align}

Therefore, if there exists $\beta \ge \mathbf{0}$, $z \ge 0$ satisfying~\eqref{eq:necc_suff_cond_CSO_verf},  that must be the solution of  dual problem~\eqref{eq:dual_prob_cso_verf}, and since strong duality holds, it is the solution of~\eqref{eq:CSO_verf_optimization_cond} that guarantees $A_s b_t \le \epsilon$ for  $t>0$, and hence $M$ is $\epsilon-$private . 
Note that, a partially similar scheme of the proof was used by \cite{accikmecse2015convex} for the safety control Markov Chain. Here, we have polytopic constraints which is a different problem formulation from  \cite{accikmecse2015convex}. 
 This proves the necessary and sufficient conditions. 
\end{proof}

Inequality~\eqref{eq:CSO_NSCond_them} characterizes a condition over $ M_{adv}$ that if satisfies implies:  if  MDP $M$ is  initially $\epsilon-$private at $t=0$, it stays $\epsilon-$private forever. 
We encode this condition as a constraint over policy $\mu$ in linear  programming that minimize the average quality loss metric $v_\mu$ given in~\eqref{eq:v_mu_average_simple}. 
The following LP solves Problem \ref{prob:alltime_CSO_optimal_policy}. 
\begin{align} \label{eq:LP_policy_design}
  &  \min \limits_{ \theta, M_{adv}}  v_\mu\\
  & \text{s.t. } \eqref{eq:CSO_NSCond_them}, \eqref{eq:M_p_infty,mu_unichain}, \eqref{eq:state-action_stionary_dist_const}, \sum_{s,a} \theta(s,a)=1,  \theta \ge \mathbf{0}. \nonumber
\end{align}
Note that, we have redefined our objective function $v_\mu$, and the adversary inference model $M_{adv}$  based on the state-action variable $\theta$. If $\theta^*$ is an optimal solution that minimizes   \eqref{eq:LP_policy_design}, then for any $s\in S$, and $a \in A$, it  induces a  stationary state distribution $p^*_{\infty}(s)= \sum_a \theta^*(s,a)$, and  a stationary optimal  policy given by $\mu^*(s,a)= \theta^*(s,a)\slash p^*_{\infty}(s)$. 

\begin{rem}
Our privacy protection framework is designed for an individual mobile user, and the privacy-preserving policy is designed to be performed at the user’s mobile device without the collaboration of a trusted third party or other users. This property of our framework is an important advantage compared to other types of solutions that rely on the intervention of a trusted centralized service, such as spatial $k-$anonymity \cite{gruteser2003anonymous}.
\end{rem}

\subsection{LPPM with unsafe adversary prior belief}
To this end, we have assumed the adversary prior belief over the secrete  user's POIs, does not violate the location privacy, i.e., $A_s b_0 \le \epsilon$.  This assumption may not be valid if the adversary has a strong background knowledge about the user mobility model, or the LPPM designer selects a tight upper bound over adversary belief that makes $b_0$ unsafe \cite{christin2016privacy}. 
Here, we aim to design a LPPM that drives the adversary belief to a $\epsilon-$private region even if her prior belief does not meet the  $\epsilon-$privacy criteria.

Our LPPM design process for this setup relies on the invariant property of ergodic Markov chain that guarantees the state probability distribution converges to a  stationary distribution independent of the initial distribution of the states. Hence, if the adversary inference model is an ergodic MC, it guarantees the adversary belief over the user's secrete  POIs, will converge to a unique stationary distribution. The LPPM objective then is to force the adversary belief  to converge to a stationary distribution that respects  $\epsilon-$privacy requirement.
\begin{lemma}
If MDP $M$ is unichain, then for any  policy $\mu$, and any stationary distribution $p_{\infty}$, respecting~\eqref{eq:stationary_dist}, $M_{adv}$ is an ergodic Markov chain.
\end{lemma}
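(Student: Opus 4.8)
The plan is to make the entries of $M_{adv}$ explicit and then show that the directed graph underlying $M_{adv}$ contains the transition graph of the policy-induced chain $M_\mu$, so that the ergodicity of $M_\mu$ (guaranteed by the unichain hypothesis) transfers verbatim to $M_{adv}$. Reading off the entries from the proof of Lemma~\ref{lem:bt_dynamic}, one has $M_{adv}(q,q') = \sum_a \hat{p}(a)\,T(q,a,q')$ with $\hat{p}(a) = \mathbf{1}^T \theta e_a = \sum_s p_\infty(s)\mu(s,a)$; this is exactly the action-averaged kernel in which each action is weighted by its stationary marginal probability $\hat{p}(a)$, and it is row-stochastic by Lemma~\ref{lem:bt_dynamic}. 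For a finite chain the target property ``ergodic'' is equivalent to the transition matrix being irreducible and aperiodic, so it suffices to establish these two combinatorial properties for $M_{adv}$.

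First I would record what Definition~\ref{def:unichain} supplies: for the given policy $\mu$ the chain $M_\mu$ is ergodic, hence irreducible, aperiodic, and equipped with a strictly positive stationary distribution, so that $\delta := \min_{s} p_\infty(s) > 0$. The crucial step is then the entrywise domination
\begin{align} \label{eq:dom_ineq}
 M_{adv}(q,q') &= \sum_a \hat{p}(a)\,T(q,a,q') \ge \sum_a p_\infty(q)\mu(q,a)\,T(q,a,q') \nonumber\\
 &= p_\infty(q)\, M_\mu(q,q') \ge \delta\, M_\mu(q,q'),
\end{align}
which follows because $\hat{p}(a) = \sum_s p_\infty(s)\mu(s,a) \ge p_\infty(q)\mu(q,a)$, all summands being nonnegative. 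Since $\delta > 0$, inequality~\eqref{eq:dom_ineq} gives $M_{adv}(q,q') > 0$ whenever $M_\mu(q,q') > 0$; in graph terms, every edge of the transition graph of $M_\mu$ is also an edge of the transition graph of $M_{adv}$.

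From this graph containment the conclusion is immediate. Irreducibility is inherited because $M_\mu$ already provides a directed path between every ordered pair of states, and these paths survive in the larger edge set of $M_{adv}$. Aperiodicity is inherited because the set of return lengths $\{n : (M_{adv}^n)(q,q) > 0\}$ at any state $q$ contains the corresponding set for $M_\mu$, and the greatest common divisor of a superset of a set with gcd equal to $1$ is again $1$. Hence $M_{adv}$ is a finite irreducible aperiodic stochastic matrix and therefore ergodic, by the Perron--Frobenius theory invoked in \cite[\S8.3]{puterman2014markov}. The main obstacle is the domination step~\eqref{eq:dom_ineq}: everything rests on the strict positivity of $p_\infty$, which is precisely what the unichain assumption guarantees and which prevents any action actually used by $\mu$ at some state from receiving zero stationary weight $\hat{p}(a)$; without it the averaged kernel $M_{adv}$ could lose edges present in $M_\mu$ and ergodicity might fail.
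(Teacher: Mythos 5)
Your proof is correct, but it takes a different route from the paper's. The paper's argument is a one-line reduction: writing $M_{adv}=\sum_a \tilde{\mu}(a)T^a$ with $\tilde{\mu}(a)=\sum_s p_\infty(s)\mu(s,a)$, it observes that $M_{adv}$ is itself the Markov chain induced on $M$ by the \emph{state-independent} policy $\tilde{\mu}(s,a):=\tilde{\mu}(a)$, and then invokes Definition~\ref{def:unichain} for that policy to conclude ergodicity directly. You instead never introduce a new policy: you prove the entrywise domination $M_{adv}(q,q')\ge p_\infty(q)M_\mu(q,q')\ge \delta M_\mu(q,q')$ with $\delta=\min_s p_\infty(s)>0$, so the transition graph of $M_{adv}$ contains that of $M_\mu$, and irreducibility and aperiodicity (hence ergodicity) transfer from $M_\mu$. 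Each approach has a distinct hidden dependency. The paper's version implicitly requires that the averaged distribution $\tilde{\mu}$ be an admissible policy of $M$, which is delicate when actions are state-restricted (the paper itself works with $A(s)\subseteq A$ and, in the experiments, zeroes out $T(s,a,\cdot)$ for unavailable $a$, so the ``for each policy'' quantifier in Definition~\ref{def:unichain} may not literally cover $\tilde{\mu}$); your version only uses ergodicity of the actual $M_\mu$, which the unichain hypothesis guarantees outright. Conversely, your domination step leans on strict positivity of $p_\infty$; this is fine under the paper's (strong) definition of ergodic, which builds in a strictly positive stationary distribution, but under the more standard unichain convention that tolerates transient states, $\delta$ could vanish and your bound would degenerate on transient rows, whereas the paper's reduction would be unaffected. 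Within the paper's own definitions, both arguments are valid.
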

\begin{proof}
Let's consider a stationary policy $\Tilde{\mu}$, and denote $\Tilde{\mu}(a)= \sum_s p_{\infty}(s) \mu(s,a)$ for all $a \in A$.  The adversary inference model~\eqref{eq:adversary_model_unichain}, can be written as $b_{t+1}=\sum_a\Tilde{\mu}(a) T^a b_t$, that implies $M_{adv}=\sum_a\Tilde{\mu}(a) T^a$.
This model captures a MDP with transition matrix $T$, with  a stationary stochastic policy $\Tilde{\mu}$ that is state independent, i.e., $\Tilde{\mu}(s,a)=\Tilde{\mu}(a)$ for all $s \in S$. Hence, since  $T$ is transition function of the unichain MDP $M$, according to the Definition \ref{def:unichain}, the  policy  $\Tilde{\mu}$, must induce an ergodic Markov chain with transition matrix $M_{adv}$. 
\end{proof}
Subsequently, for the adversary inference model~\eqref{eq:adversary_model_unichain}, 
there  exists a unique stationary distribution $b_{\infty} \in \mathbb{R}^{n}$, satisfying:
\begin{align} \label{eq:belief_station_dist}
    b_\infty =M_{adv} b_\infty, 
\end{align}
 and $\lim_{t \to \infty} b_t = b_{\infty}$, independent of $b_0$. We define asymptotic location privacy  based on the adversary  stationary belief given in below.
\begin{definition}
 Given user a mobility pattern modeled by a unichain MDP $M=(S,A,T,p_0,u)$, and a set of secrete  POIs, $S_{s} \subset S$, $M$ is asymptotically $\epsilon-$private under policy $\mu$, if
  \begin{equation}\label{eq:Asym_CSO_opacity}
\lim_{t \to \infty} \sum_{s \in S_{s}} {b_t(s)}   \le \epsilon.
 \end{equation}
\end{definition}

Due to ergodic property of $M_{adv}$, asymptotic $\epsilon-$privacy requirement~\eqref{eq:Asym_CSO_opacity} can be simply written as $A_s b_{\infty} \le \epsilon$.
This requirement characterizes a region in the adversary belief space that the user's secrete  POIs are considered safe. Therefore, even if the adversary prior belief is not in this region, i.e., $A_s b_0 \nleq \epsilon$, the LPPM can still attempt to  achieve  asymptotic $\epsilon-$privacy requirement~\eqref{eq:Asym_CSO_opacity}, by designing a policy $\mu$ that drives the adversary belief $b_t$ to eventually reaches asymptotic $\epsilon-$private region, $A_s b_\infty \le \epsilon$, and stays there. Formally, policy $\mu$ can enforce user mobility model $M$ to be asymptotically $\epsilon-$private, if there exist a stationary distribution $p_{\infty}$ satisfying~\eqref{eq:stationary_dist}, and  the following set is not empty. 
\begin{align} \label{eq:asym_CSO_const}
  \Omega_{b_{\infty} }(\theta)= \{&  b_{\infty} \in \mathbb{R}^n \mid  b_{\infty}\in \Delta(S),\\
   & b_{\infty} M_{adv} = b_{\infty} ,    A_s b^T_{\infty} \le \epsilon \}. \nonumber
\end{align}

In this setup, in order to minimize the average quality loss $v_\mu$,  we  design  an optimization problem with objective function~\eqref{eq:v_mu_average_simple}, and consider~\eqref{eq:asym_CSO_const} as a constraint.
The main challenge, however, is the bilinear form of~\eqref{eq:belief_station_dist} over  $b_\infty$ and $\theta$, that makes the problem non-convex. Formally:
\begin{align} \label{eq:BMI_design}
  &  \min \limits_{\theta, M_{adv}, b_{\infty} } v_\mu\\
  & \text{s.t. } \eqref{eq:asym_CSO_const}, \eqref{eq:M_p_infty,mu_unichain}, \eqref{eq:state-action_stionary_dist_const}, \sum_{s,a}\theta(s,a)= 1, \theta \ge \mathbf{0}. \nonumber
\end{align} 
Similarly, if $\theta^*$ is the optimal solution of \eqref{eq:BMI_design}, then for any $s\in S$, and $a \in A$, it  induces  $p^*_{\infty}(s)= \sum_a \theta^*(s,a)$, and the optimal stationary policy $\mu^*(s,a)= \theta^*(s,a)\slash p^*_{\infty}(s)$. 

\subsection{Computation Overhead}
The proposed LPPM framework designs the privacy-preserving policies  based on the linear programming~\eqref{eq:LP_policy_design}, where  the dimension of the problem is defined based on the
number of optimization variables, $d$, and the  
number of constraints $l$. In~\eqref{eq:LP_policy_design}, 
the constraints~\eqref{eq:CSO_NSCond_them} and~\eqref{eq:M_p_infty,mu_unichain} can be merged into a single inequality. Therefore,  we have  $l=|S|\times|A|+2|S|+1$ constraints, and $d=|S|\times|A|+|S|+1$ optimization variables.
The most common tool in practice to solve linear programming is simplex algorithms that is quite efficient.
In particular, the number of iterations seemed polynomial in $d$ and $l$ \cite{spielman2004smoothed}, but the worst-case complexity is proven to be exponential \cite{klee1970good}.

The asymptotic privacy-preserving synthesis problem~\eqref{eq:BMI_design}  is a bilinear matrix inequality (BMI) optimization problem.  It is proven that a general BMI constrained optimization problem is NP-hard, but despite this theoretical barrier, various approaches have been developed in the literature to tackle these optimization problems. BMI optimization problems can be solved by forming a sequence of semidefinite programming (SDP) relaxations  \cite{vandenberghe1996semidefinite}, or with other general nonlinear optimization methods such as sequential quadratic programming (SQP) \cite{boggs1995sequential}.

\section{Experimental Results} \label{sec:case_study}
In this section, we study  effectiveness of the proposed LPPM on a realistic case study involving publicly available data. Particularly, we select Geolife database \cite{zheng2010geolife} to conduct our experiment.
\subsection{Dataset Description}
This database is a GPS trajectory dataset collected by Microsoft Research Asia over three years by 182 users.
During this project, a wide range of users' outdoor movements are recorded, including daily life routines like commuting to work, and activities such as shopping, dining, and cycling.
The users' movements in this dataset are represented by a series of tuples containing latitude, longitude, and timestamps. The dataset contains mobility traces with the total number of  $25$ million locations, however, the variance of the number of locations per user and the total duration of the trajectories are very high. In our experiment, we have filtered the dataset to keep only users' trajectories with more than 500 locations and a duration of at least 1 year. After this process, the final dataset contains 18 users' data, and among them, we picked the user $\#114$ who has a large set of mobility traces.

\subsection{MDP Construction Process}
The MDP construction algorithm is depicted in Algorithm \ref{alg:MDP_construction}.
We first extract the user's POIs from his mobility traces by adapting the clustering algorithm called Density-Joinable cluster (DJ-Cluster) \cite{gambs2012next} for our experiment.
Following this method, 
a user's POI is defined by the centroid of an area where the user frequently visits and spends a given amount of time. The maximum diameter of this  area, denoted by $\text{MaxRadi}$, the minimum duration of stay in this area, denoted by $\text{MinStay}$, and the maximum distance between the areas, denoted by $\text{MinDist}$, are required parameters to characterize the user's POIs.

DJ-Cluster algorithm has three phases. The first phase is to preprocess the user's mobility traces to extract the stationary points. 
Given a predefined constant $\text{MinSpeed} \ge 0$, we delete all the traces in which the user's speed is greater than  $\text{MinSpeed}$. The second phase is to construct a set of clusters from the remaining mobility traces and then merge the user's locations which are in $\text{MaxRadi}$ radius of the clusters' centroid.
The third phase is to merge the computed clusters in which their centroid are within $\text{MinDist}$ distance. Once, the clustering process is finalized, for each cluster $c$, we compute the user spending time in each cluster in hours, and  denote it by $\text{stay}(c)$. Given a predefined positive constant $\text{MinStay} > 0$, we remove any cluster $c$ in which  $\text{stay}(c) <\text{MinStay}$. The remaining clusters form the user's POIs that represented by the states set $S$ in the MDP $M$. The described procedure is shown in the first part of the Algorithm \ref{alg:MDP_construction} (lines 1-4). 

Figure \ref{fig:casestudy_POIs} illustrates the user mobility traces. We set $MinSpeed=0$ to extract user stationary points, and use orthodromic distance to compute the distance between the stationary points to form a cluster. The extracted  POIs  are shown in Figure \ref{fig:casestudy_POIs} with red marks representing the centroid of each cluster. Without loss of generality, here we assume the initial distribution $p_0(s_1)=1$, otherwise $p_0(s)=0$.  
 
The next is to define the MDP actions set $A$. As it is discussed in Section \ref{subsec:Loc_Release_Mech}, we consider the location releasing mechanisms as a set of actions. Here, we use spatial cloaking techniques to conceal the user's POIs. In particular, we utilize the $k-$area cloaking method proposed in \cite{gruteser2004protecting}. This technique provide location privacy by blending the user secrete location into a region that covers at
least $k-1$ other sensitive areas. Thus, the user's secret location at each  POI area becomes indistinguishable among other $k$ POIs. Here, we consider all the extracted POIs are potentially sensitive areas. For a sake of simplicity, we set  $k=2$, and consider a curricular cloaking area. Following $k-$area cloaking method \cite{gruteser2004protecting}, for any POI $s \in S$, we construct a  cloaked region $a$, with radius $r_a$, such that it contains $s$, and  at least another POI $s' \in S$. The $k-$area cloaking method iterates over each POI to find the smallest $r_a$ that meets the $2-$area cloaking privacy requirement.
Figure \ref{fig:casestudy_POIs} illustrates the constructed cloaked areas for two POIs that are represented by blue circles.
The POIs extracted from the user mobility traces are considered as the user's true location and are not accessible by the adversary. The cloaking areas however, represent the user's service query and hence are observable to the adversary. Given any pairs of $s_i,s_j \in S$, and a cloaked region $a \in A$, the transition probability $T(s_i,a,s_j)$ can be interpreted as the probability that the  user reports area  $a$  to the LBS, and transits from $s_i$ to $s_j$.  Following \cite{gambs2012next}, we first compute $p_{ij}=\text{Prob}\left[s_j\mid s_i\right]$ as the frequency of the user traveling  from $s_i$ to $s_j$, i.e. $p_{ij}=\frac{n_{ij}}{N_i}$, where $N_i$ is the total number of time the user traveled from state $s_i$, and $n_{ij}$ is number of time the user has traveled from $s_i$ to $s_j$. Then, for any $a\in A$, and $s_i,s_j \in S$, we define $T(s_i,a,s_j)=p_{ij}$ if the cloaking area $a$ conceals $s_i$, and otherwise $T(s_i,a,s_j)=0$.

The quality loss function for this model is associated with the area of the cloaked regions. More precisely, 
we assume the user is querying for information only when he is inside the POIs area, and the $k-$area cloaking mechanism obfuscates the user's location in the query by increasing the area of retrieval. Hence the user's quality of service is expected to degrade proportionally with the area of the reported cloaked region \cite{ku2009privacy}. Therefore,  for any POI $s$, and a cloaked region $a \in A$, the utility function can be defined by $u(s,a)=\frac{\text{Area}(a)}{\text{Area}(s)}$. Note that, 
if $a$ is not available at  $s$, the LPPM policy must avoid reporting  $a$ when the user is inside $s$, and hence, we set $u(s,a)= \overline{u}$, where $\overline{u} >> \max_{s,a}(u(s,a))$ representing a large quality loss.  The described procedure is depicted in the second part of the Algorithm \ref{alg:MDP_construction} (lines 5-9).

\DontPrintSemicolon
\begin{algorithm}[!t]  
 \caption{MDP construction algorithm.}
  \label{alg:MDP_construction}
\SetAlgoLined 
\KwIn{$\text{MinSpeed} > 0,\text{MinDist} > 0,\text{MinStay} > 0,\text{MaxRadi}>0,\text{MinDist}>0$, ${k}$.}
\KwResult{User mobility MDP $ M=(S,A,T,p_0,u)$}
 Delete all the traces that user's speed $> \text{MinSpeed}$, and store  the stationary points at $\text{StPoint}$ set.\;
 Merge all points in $\text{StPoint}$ which are in $\text{MaxRadi}$ distance of each other and store as a cluster set $\mathcal{C}$.\;
 Merge any pair of clusters  $c,c' \in \mathcal{C}$ which  their distance is $\text{dist}(c,c') < \text{MinDist}$. \;
 Compute user spending time, $\text{stay}(c)$, for each cluster $c \in \mathcal{C}$, and remove all the clusters which  $\text{stay}(c) < \text{MinStay}$, and store $ \mathcal{C}$ as the  state set $S$.\;
 Set the initial distribution $p_0(s)$.\;
 For all $s_i \in S$, construct a cloaked region $a_i$ with radius $r_{a_i}$ such that the $k$-area clocking criteria is met \cite{gruteser2004protecting}. Store all $a_i$ in set $A$.\;
 For any pair of states $s_i,s_j \in S$, compute the transition probability $p_{ij}$.\;
 For any cloaked region $a_l \in A$, and any pair of states $s_i,s_j \in S$, compute the MDP transition probability by:
\begin{flalign*}
  T(s_i,a_l,s_j)= \begin{cases} p_{ij}& \text{if } s_i \in a_l \\ 0 & \text{otherwise}, \end{cases}    &&
\end{flalign*} 
where $s_i \in a_l$ means the all the points of cluster $s_i$ are inside the cloaked region $a_l$.\; 
 For any state $s \in S$, and cloaked region $a \in A$, compute the quality loss function as $u(s,a)=\frac{\text{Area}(a)}{\text{Area}(s)}$, if $s \in a$, and otherwise  set $u(s,a)=\overline{u}$.\;
\end{algorithm}

 \begin{figure*}
 \begin{minipage} {.32\textwidth}
 \includegraphics[width=\linewidth]{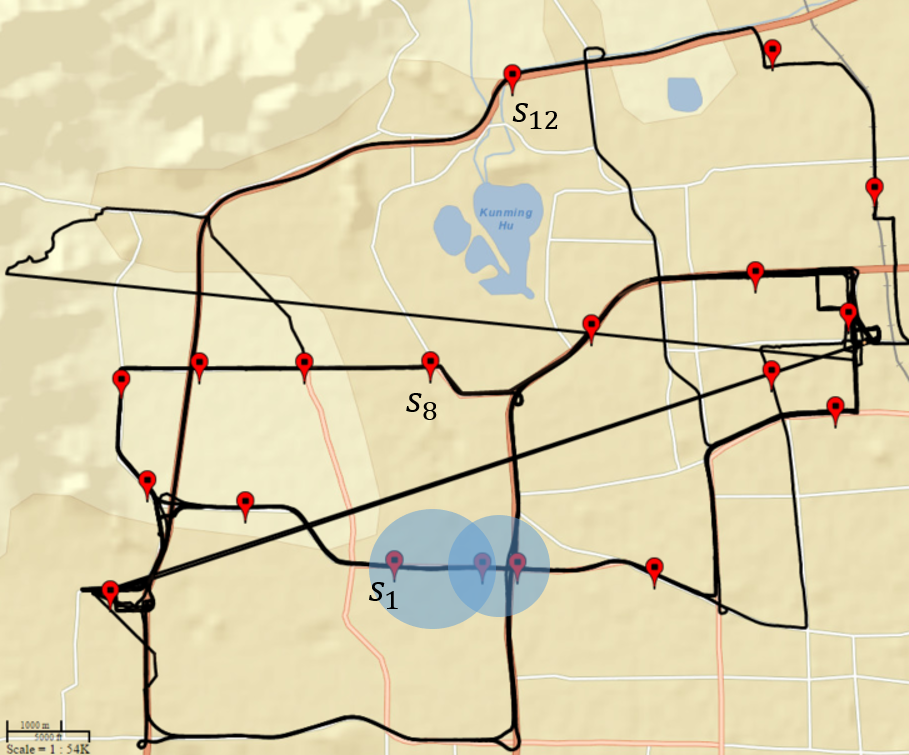}
  \caption{Example of real-world mobility traces of a mobile user. The solid black line represents the traces of user movements, the red marks are the extracted POIs, and the blue circles are examples of the cloaked areas. 
  }
  \label{fig:casestudy_POIs}
 \end{minipage} \hfill
  \begin{minipage}{.32\textwidth}
    \includegraphics[width=\linewidth]{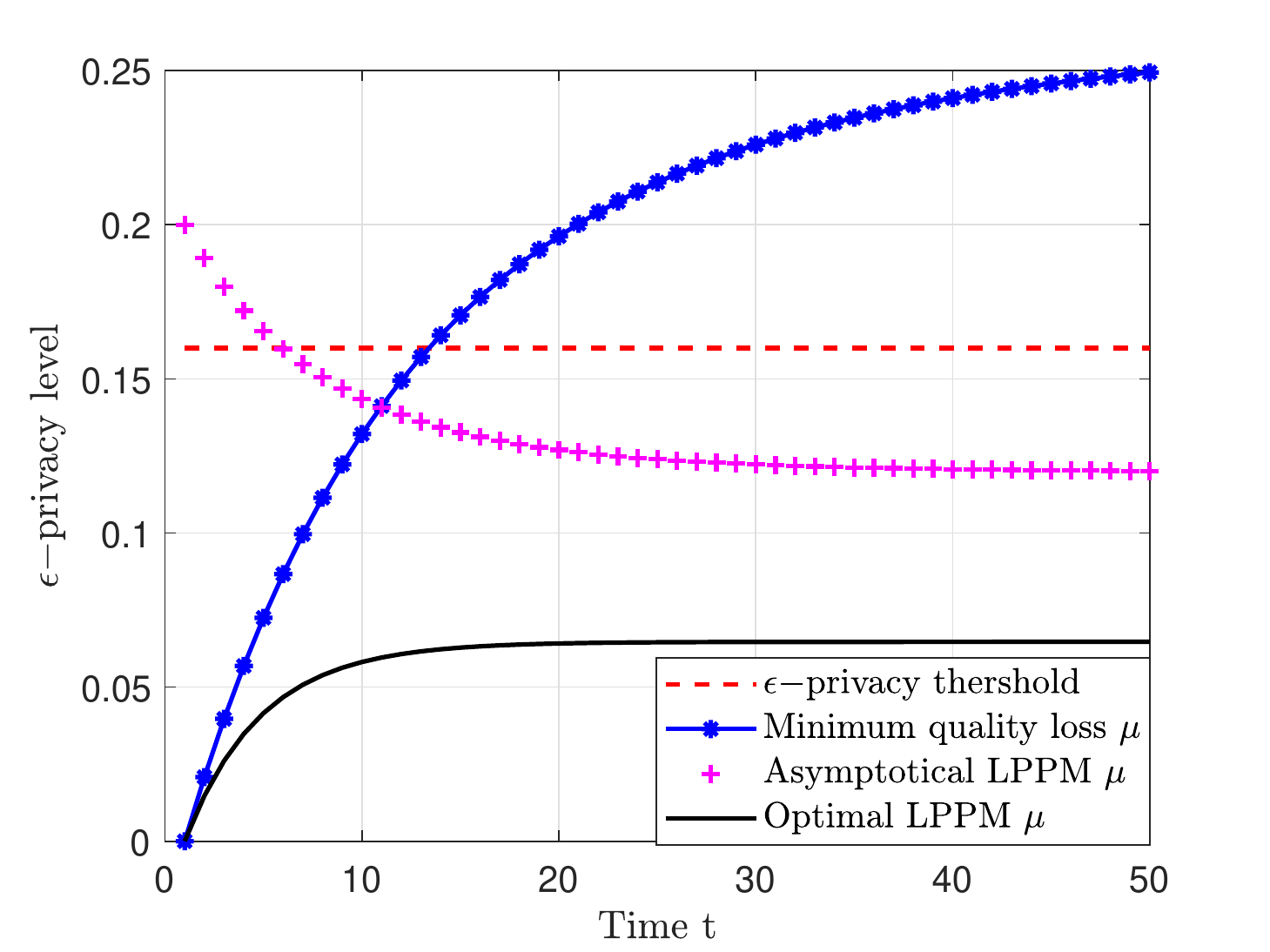}
  \caption{ 
 The dashed line represents $\epsilon$. Respectively, the blue dash-dot, black solid, the purple plus-sign  lines represent  $b_t(S_s)$  for the optimal policy $\mu$,  that provides the best utility  without considering $\epsilon-$privacy, and $\mu$ that guarantees $\epsilon-$privacy,  and asymptotic  $\epsilon-$privacy requirements. 
}
  \label{fig:casestudy_CSO_level}
  \end{minipage} \hfill
  \begin{minipage}{.32\textwidth}
   \includegraphics[width=\linewidth]{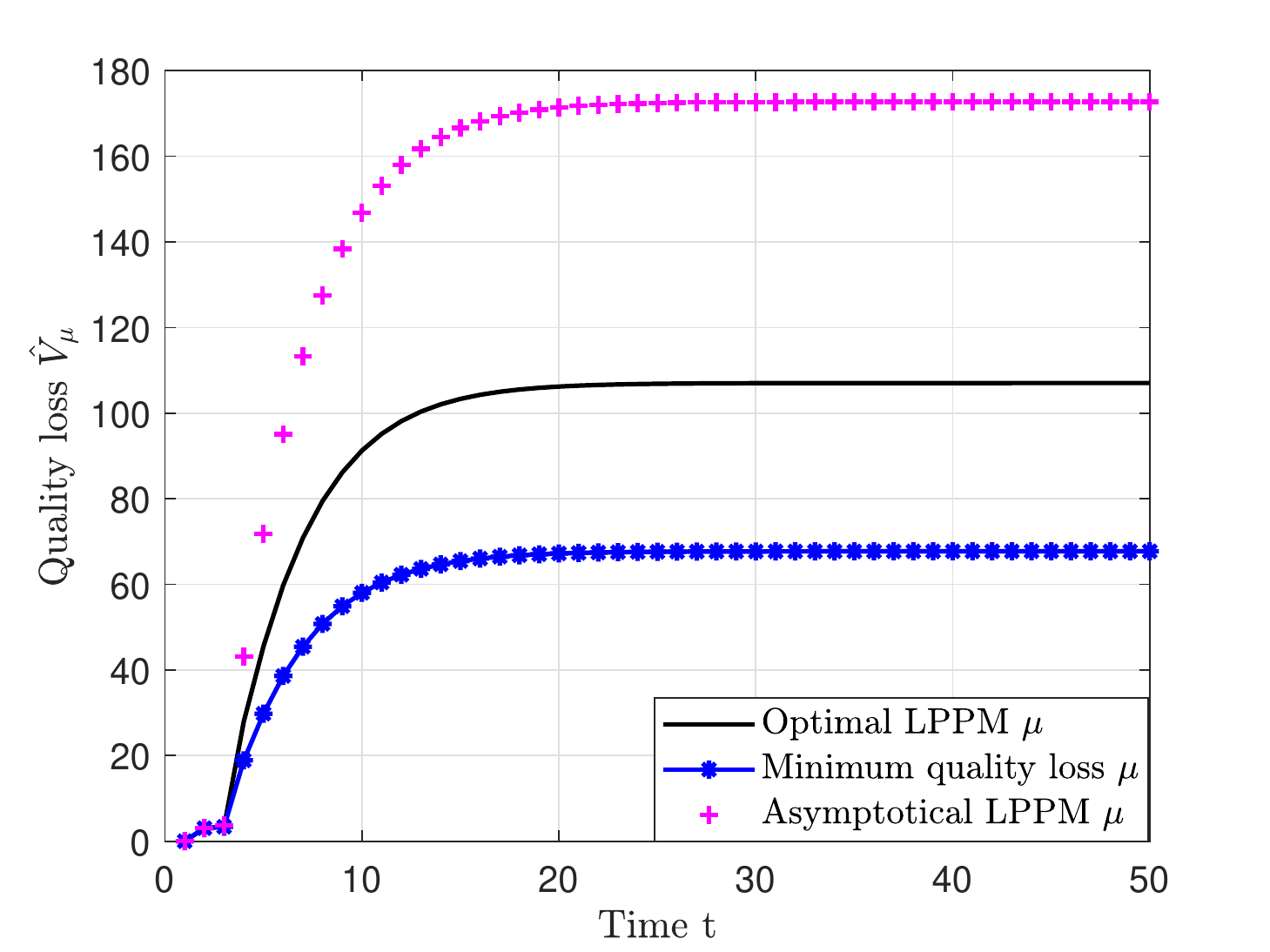}
  \caption{Respectively, the blue dash-dot, black solid, the purple plus-sign  lines represent $\hat{v}_\mu(t)$ for the optimal policy $\mu$, that provides the minimum quality loss without considering  $\epsilon-$privacy  requirement,  and $\mu$ that guarantees $\epsilon-$privacy,  and asymptotic  $\epsilon-$privacy requirements.}
  \label{fig:casestudy_qualityloss_level}
  \end{minipage} 
 
 \end{figure*} 
 
 \subsection{Simulation with Real-world Dataset}
 The constructed  MDP has $19$ states, and $19$ actions, and all  the states respect $k-$area cloaking privacy requirement \cite{gruteser2004protecting}. We found one of the user's extracted POI is nearby a bank, labeled by $s_{12}$, and another one is nearby a bus station, labeled by $s_8$. We consider them as the user's secret POIs, and therefore, we define $S_s=\{s_8,s_{12}\}$.  
 We assume the adversary knows the user's mobility model $M$, and maintains a prior belief over the user's POIs. 
 The adversary observes the reported cloaked areas and uses inference model~\eqref{eq:adversary_general_model} to infer the user presence at $S_s$. 
 Quantify by $\epsilon-$privacy metric~\eqref{eq:CSO_opacity}, the LPPM protects user privacy by randomizing the selection of cloaked regions to suppress the adversary belief on $S_s$.

 Our objective here is  to evaluate the privacy level and the expected total quality loss when the user incorporates the policy $\mu$ in his location releasing mechanism.   Let's consider $\epsilon = 0.16$ as the desired privacy level, and denote $b_t(S_s)=\sum_{s \in S_s}b_t(s)$.  In order to demonstrate  the adversary can improve his knowledge over the user's secrete locations, we set $b_0(s)=0$ for all $s \in S_s$, and
 $b_0(s)=1/|S -  S_s|$ for any $s \in \{S -  S_s\}$,  which also indicates  $A_s b_0 < \epsilon$. 

 We first design a policy that minimizes the user's average quality loss $v_\mu$ without considering the $\epsilon-$privacy requirement. In particular, we use our proposed LP~\eqref{eq:LP_policy_design} without  $\epsilon-$privacy constraint~\eqref{eq:CSO_opacity}. 
 As it can be seen in  Figure \ref{fig:casestudy_CSO_level}, although initially $M$ is $\epsilon-$private, the adversary belief  over secret state $S_s$, increases over time and eventually violates the desired $\epsilon-$privacy condition. Intuitively, the adversary becomes more confident over the user presence at the area of $S_s$, although all the user's POIs are concealed individually with the cloaking regions. The average quality loss associated with this policy is $v_\mu=67.72$.
 We then synthesize the privacy-preserving policy using LP~\eqref{eq:LP_policy_design} with the privacy level $\epsilon=0.16$. Demonstrated by Figure \ref{fig:casestudy_CSO_level}, the LPPM can suppress the adversary belief to meet the desired privacy level. This level of user privacy, however, comes with a price. The average quality loss has increased to $v_\mu=107.03$ when the user uses the privacy-preserving policy. To further illustrate this trade-off, let's define the user's quality loss at time $t$ by $\hat{v}_\mu(t)=\sum_{s,a}p_t(s)\mu(s,a)u(s,a)$. Figures \ref{fig:casestudy_qualityloss_level}  shows how $\hat{v}_\mu(t)$  evolves over time for different policies.   As it is depicted in Figures \ref{fig:casestudy_CSO_level} and \ref{fig:casestudy_qualityloss_level},  although the proposed LPPM can suppress the adversary belief over the secret locations, the user continuously receives lower quality of service when the privacy-privacy policy is incorporated in the LPPM. 
 
  Now let's consider a scenario that the adversary prior belief violates the $\epsilon-$privacy condition,   implying the adversary background knowledge over the user presence at $S_s$ is unsafe.  In this case,  the LPPM can still deceive the adversary by manipulating her belief over secrete locations to eventually satisfy the defined $\epsilon-$privacy criteria. 
Let's assume the adversary prior belief over the user's secret POIs is $\sum_{s \in S_s} b_0(s)=0.2$, and for other POIs $s' \in \{S - S_s\}$, is defined as $b_0(s')=0.8 \slash |S - S_s|$, indicating the violation of defined $\epsilon-$privacy, i.e., $A_s b_0 \not \le \epsilon$.  The LPPM objective here is to design the policy $\mu$ that forces the adversary  belief $b_t$ to meet the  asymptotic $\epsilon-$privacy requirement~\eqref{eq:Asym_CSO_opacity}.
To achieve this, we synthesize  $\mu$ using the proposed BMI given by~\eqref{eq:BMI_design}.
The trajectory of $b_t(S_s)$, and the associated $\hat{v}_\mu(t)$ are respectively shown in Figure \ref{fig:casestudy_CSO_level} and \ref{fig:casestudy_qualityloss_level}. As it is demonstrated, although the adversary prior belief violates the $\epsilon-$privacy condition, the LPPM can deceive the adversary by gradually suppressing his belief over the user's secrete  POIs  $b_t(S_s)$, and asymptotically satisfies the desired $\epsilon-$privacy requirement.

The proposed privacy-preserving policy synthesis  is simulated in MATLAB on a PC with
Intel(R) Core(TM) i7-8650 CPU 1.9GHz 16GB RAM running on Windows 10 professional OS. 
Finding the optimal $\epsilon-$privacy LPPM,  expressed as a LP~\eqref{eq:LP_policy_design}, took $0.46$ second, and synthesis of the optimal asymptotic $\epsilon-$privacy LPPM, given as a BMI problem~\eqref{eq:BMI_design}, took $29.53$ seconds.

\section{Conclusion}\label{sec:conclusion}
In this paper, we have designed and demonstrated a model-based privacy-preserving framework that guarantees a user-defined privacy requirement for an infinite time horizon while minimizing the quality loss of service received by the user.  In this regard, a MDP is constructed to capture the user mobility pattern and the LBS utility model. Given the MDP with state representing the user's locations, we adapt the probabilistic current-state opacity notion to introduce new location privacy notion, $\epsilon-$privacy, which characterizes the user privacy against a Bayesian adversary with localization attack model. Through this setup, we illustrated that even if each user location is concealed from the adversary, she still can utilize the user mobility model to further reduce his uncertainty over the user's secret locations. Given this privacy concern, we developed a LPPM that randomizes the obfuscation mechanisms to protect user privacy against the adversary with such an inference capability. The overall privacy-preserving framework is demonstrated and validated on an experimental dataset.

\ifCLASSOPTIONcompsoc
  \section*{Acknowledgments}
\else
  \section*{Acknowledgment}
\fi

This work was supported in part by the National Science Foundation under Grant Grant IIS-1724070, and Grant CNS-1830335, and in part by the Army Research Laboratory under Grant W911NF-17-1-0072.

\ifCLASSOPTIONcaptionsoff
  \newpage
\fi



%
\bibliographystyle{IEEEtran}
\bibliography{LBS_Privacy_Preserving_Ref.bib}

%
\vspace{-10 mm}
\begin{IEEEbiography}[{\includegraphics[width=1in,height=1.25in,clip,keepaspectratio]{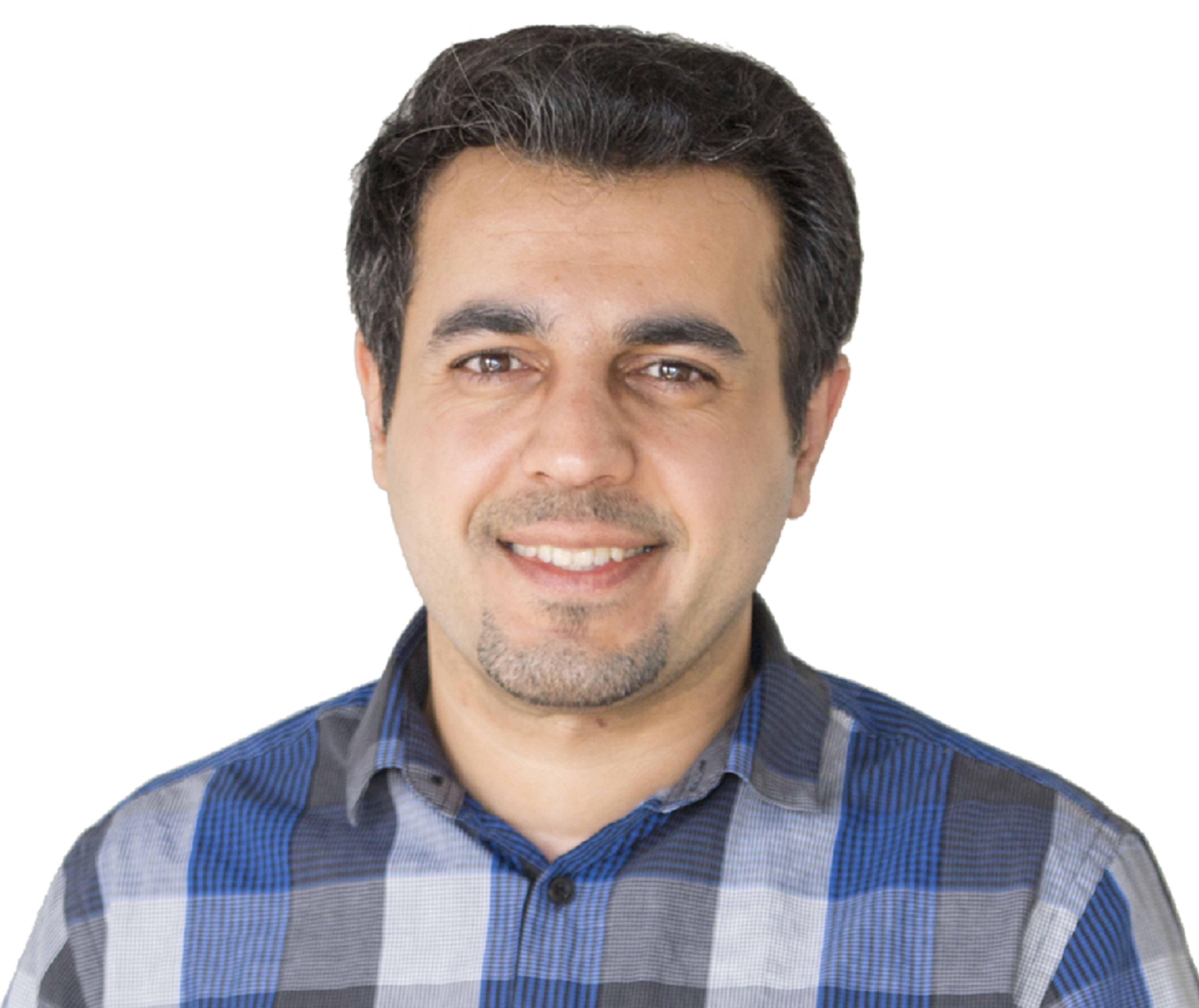}}]{Alireza Partovi}
received his M.E. in Electrical Engineering from National University of Singapore  and in Control System Engineering from University of Notre Dame. He is currently working toward Ph.D. degree at the University of Notre Dame. His current research interests include formal
methods and control of probabilistic systems with
application in cyber-security and multi-robot
systems.
\end{IEEEbiography}
\vspace{-10 mm}
\begin{IEEEbiography}[{\includegraphics[width=1in,height=1.25in,clip,keepaspectratio]{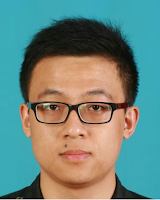}}]{Wei Zheng}
received his B.E. degree in Automation and M.E. degree in Control Science and Engineering from the University of Science and Technology of China, Hefei, China, in 2012 and 2016, respectively. He is currently working toward Ph.D. degree at the University of Notre Dame.
\end{IEEEbiography}
\vspace{-10 mm}
\begin{IEEEbiography}[{\includegraphics[width=1in,height=1.25in,clip,keepaspectratio]{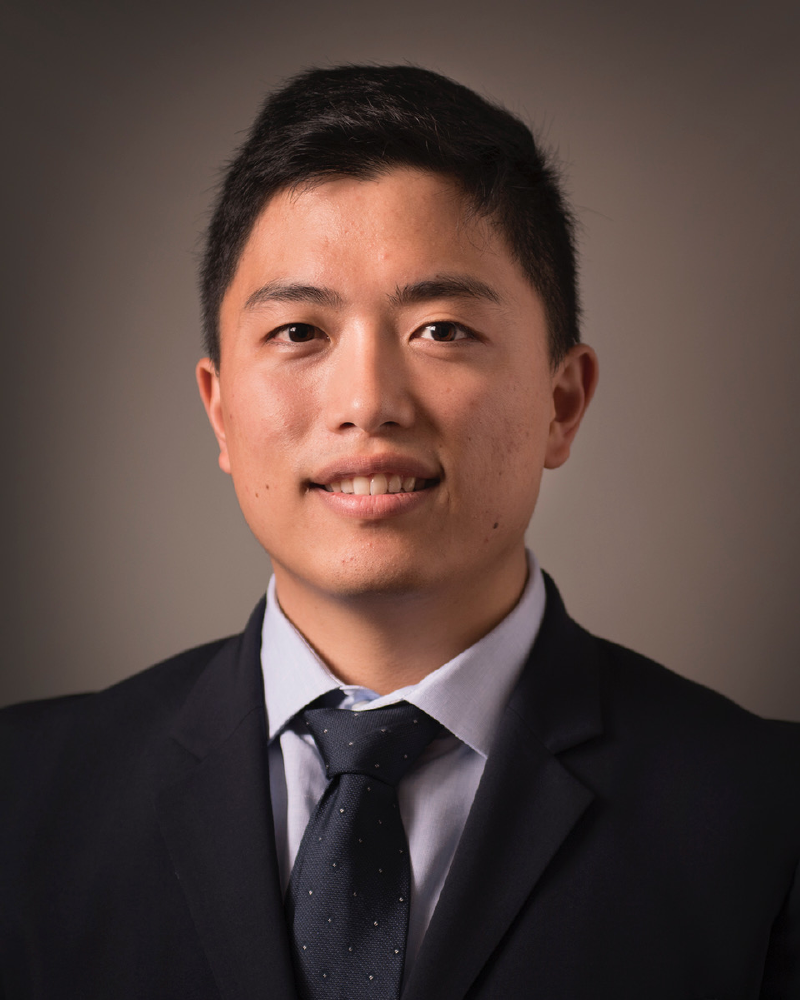}}]{Taeho Jung}
 is an assistant professor of Computer Science and Engineering at the University of Notre Dame. He received the Ph.D. from Illinois Institute of Technology in 2017 and B.E. from Tsinghua University in 2011. His research area includes data security, user privacy, and applied cryptography. His paper has won a best paper award (IEEE IPCCC 2014), and two of his papers were selected as best paper candidate (ACM MobiHoc 2014) and best paper award runner up (BigCom 2015).
\end{IEEEbiography}
\vspace{-10 mm}
\begin{IEEEbiography}[{\includegraphics[width=1in,height=1.25in,clip,keepaspectratio]{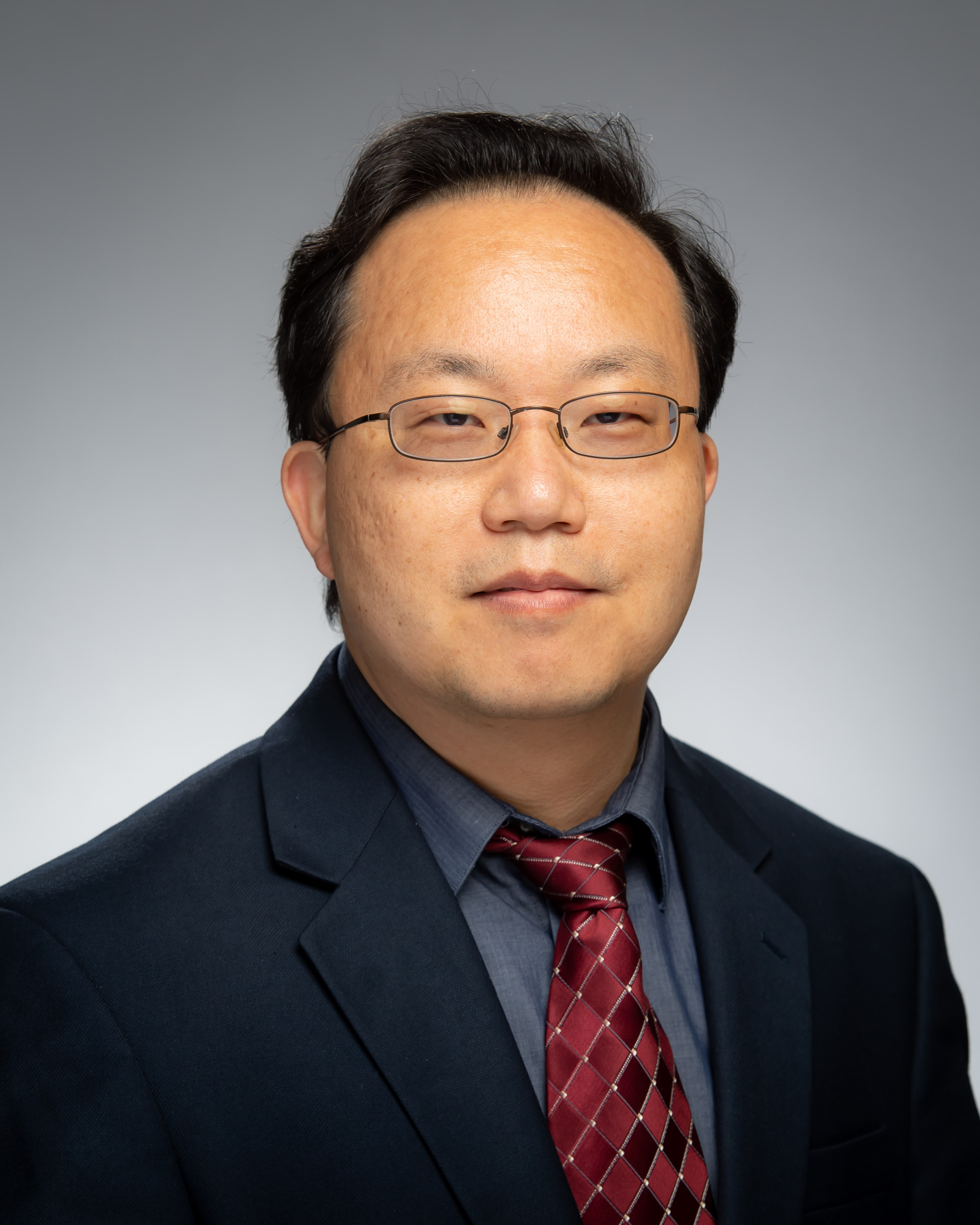}}]{Lin Hai}
 is currently a professor at the Department of Electrical Engineering, University of Notre Dame, where he got his Ph.D. in 2005. Before returning to his alma mater, Hai has been working as an assistant professor in the National University of Singapore from 2006 to 2011. Dr. Lin's teaching and research interests are in the multidisciplinary study of the problems at the intersections of control, machine learning and formal methods. His current research thrust is on cyber-physical systems, multi-robot cooperative tasking, human-machine collaboration, and security/privacy. Hai has been served in several committees and editorial board, including IEEE Transactions on Automatic Control. He served as the chair for the IEEE CSS Technical Committee on Discrete Event Systems, program chair for IEEE ICCA 2011, IEEE CIS 2011 and the chair for IEEE Systems, Man and Cybernetics Singapore Chapter for 2009 and 2010. He is a senior member of IEEE and a recipient of 2013 NSF CAREER award.
\end{IEEEbiography}
\vfill




\end{document}